\newtheorem{definition}{Definition}
\newtheorem{lemma}{Lemma}
\newtheorem{remark}{Remark}
\begin{document}

\title{Diffusion Model-based Incentive Mechanism with Prospect Theory for Edge AIGC Services in 6G IoT}

\author{Jinbo Wen, Jiangtian Nie, Yue Zhong, Changyan Yi, Xiaohuan Li, Jiangming Jin,\\ Yang Zhang*, and Dusit Niyato, \textit{Fellow, IEEE}

\thanks{ 
 
J. Wen, C. Yi, and Y. Zhang are with the College of Computer Science and Technology, Nanjing University of Aeronautics and Astronautics, China (e-mails: jinbo1608@nuaa.edu.cn; changyan.yi@nuaa.edu.cn; yangzhang@nuaa.edu.cn).

J. Nie and D. Niyato are with the School of Computer Science and Engineering, Nanyang Technological University, Singapore (e-mails: jnie001@e.ntu.edu.sg; dniyato@ntu.edu.sg).

Y. Zhong is with the School of Automation, Guangdong University of Technology, China (e-mail: 3220001516@mail2.gdut.edu.cn).

X. Li is with the School of Information and Communication, Guilin University of Electronic Technology, China (e-mail: lxhguet@guet.edu.cn).

J. Jin is with Qingcheng AI, China (e-mail: jiangming.jin@outlook.com).

\textit{*Corresponding author: Yang Zhang}

	} 
}

\maketitle

\begin{abstract}
The fusion of the Internet of Things (IoT) with Sixth-Generation (6G) technology has significant potential to revolutionize the IoT landscape. With the ultra-reliable and low-latency communication capabilities of 6G, 6G-IoT networks can transmit high-quality and diverse data to enhance edge learning. Artificial Intelligence-Generated Content (AIGC) harnesses advanced AI algorithms to automatically generate various types of content. The emergence of edge AIGC integrates with edge networks, facilitating real-time provision of customized AIGC services by deploying AIGC models on edge devices. However, the current practice of edge devices as AIGC Service Providers (ASPs) lacks incentives, hindering the sustainable provision of high-quality edge AIGC services amidst information asymmetry. In this paper, we develop a user-centric incentive mechanism framework for edge AIGC services in 6G-IoT networks. Specifically, we first propose a contract theory model for incentivizing ASPs to provide AIGC services to clients. Recognizing the irrationality of clients towards personalized AIGC services, we utilize Prospect Theory (PT) to capture their subjective utility better. Furthermore, we adopt the diffusion-based soft actor-critic algorithm to generate the optimal contract design under PT, outperforming traditional deep reinforcement learning algorithms. Our numerical results demonstrate the effectiveness of the proposed scheme.

\end{abstract} 

\begin{IEEEkeywords}
6G-IoT networks, edge AIGC, contract theory, prospect theory, generative diffusion models.
\end{IEEEkeywords}
\IEEEpeerreviewmaketitle

\section{Introduction}\label{Intro}
Recently, the Internet of Things (IoT) has faced challenges such as diverse application scenarios, increasing data volumes, and growing computational demands, posing obstacles for current wireless communication technologies to effectively support large-scale IoT deployments \cite{9369324}. Thanks to further improvements in coding techniques and radio interface modulation, Sixth-Generation (6G) networks will be significantly faster than previous generations, allowing users to connect to each other everywhere\cite{dang2020should}. With the imminent arrival of 6G technology, which heralds a transformative leap in wireless communication networks, the integration of 6G with IoT holds the potential to address these challenges \cite{vaezi2022cellular}. Since 6G has the advantage of ultra-reliable, large-scale coverage, and low-latency communications\cite{wen2022optimal}, 6G-IoT networks can provide ultra-high speed services for users, and enhance the connectivity and scalability of the IoT ecosystem\cite{you2021towards}. Moreover, 6G-IoT networks can offer robust support for edge computing, increasing efficiency by locating computing and data storage close to the data source \cite{qadir2023towards}. This approach enables the real-time provisioning of diverse data types for training Artificial Intelligence (AI) models\cite{qadir2023towards}, thereby facilitating intelligent decision-making across various IoT application scenarios. For instance, 
%in smart transportation systems, 6G-IoT networks can collect real-time traffic data, utilizing AI algorithms to analyze the data and optimize traffic flow\cite{wang20206g}. 
in smart healthcare, 6G-IoT networks play a pivotal role in supporting applications such as remote patient monitoring and personalized medicine by furnishing a continuous stream of data for AI-driven analysis and decision-making \cite{9922093}.

As a novel paradigm for the production and manipulation of data, AI-Generated Content (AIGC) has attracted extensive attention from academia and industry\cite{xu2024unleashing}. Unlike discriminative AI that focuses on analyzing or classifying existing data, Generative AI (GAI), as a core technique of AIGC, can learn patterns and structures from existing data and autonomously generate novel content, such as text, images, audio, and even synthetic data\cite{xu2024unleashing}. GAI constitutes a range of different models, each with its own unique advantages and practical applications\cite{du2023deep}. For example, Generative Diffusion Models (GDMs) excel in image generation \cite{du2023enabling} and network optimization\cite{du2023deep}. The applications of AIGC span a wide range of domains, showcasing its versatility in creative content generation, language translation, and remarkable image synthesis\cite{cao2023comprehensive}. ChatGPT, as a transformer-based large language model, excels in natural language dialogues, while DALL-E generates original and realistic images based on user prompts. Thanks to these prominent capabilities, AIGC has the capacity to drive the progression of 6G-IoT networks\cite{wen2024generative}.
% GDM exemplifies a sophisticated approach in the realm of GAI, showcasing exceptional efficacy and versatility in various tasks and domains. 
%In the future, the integration of GAI and 6G technology presents new prospects, which enables GAI to assimilate personalized data knowledge from the extensive network of connected 6G IoT devices \cite{chen2024gainnet}. Simultaneously, the formidable generation capabilities of GAI can offer diverse AI Generated Content (AIGC) services to the 6G IoT devices. 

%第三段（Challenge）：然后谈谈目前GAI主要还是中心化架构，存在时延问题，所以可以将GAI与边缘计算结合，引出Edge AIGC的概念及特点（关于Edge AIGC的概念可以参考GPT,给一个大致的定义）。接着举一个Edge AIGC提供服务的例子（可以去看寅秋师兄那篇magainze第6页和杜博的文章），引出目前这些ASPs是work for free，同时，用户对AIGC服务存在个性化需求与主观感受，所以需要制定一个User-centric incentive mechanism
Although AIGC is believed to have the potential to revolutionize existing production processes\cite{xu2024unleashing}, there is a significant AIGC service latency problem due to the existing centralized AIGC framework, and users currently accessing AIGC services on mobile devices lack the support of computing and storage resources\cite{xu2024unleashing}. To address these challenges, integrating GAI with mobile edge computing gives rise to the concept of edge AIGC \cite{xu2023edge, du2023enabling}. Edge AIGC refers to the deployment of GAI models on edge devices, and edge devices acting as AIGC Service Providers (ASPs) can provide personalized and low-latency AIGC services for users\cite{peng2020multi}, effectively enriching user experiences \cite{du2023enabling}. This paradigm shift to edge deployment enables efficient content generation, reduces service latency, and optimizes network resources\cite{wen2023freshness}. Currently, some crowdsourcing platforms for edge AIGC services have been proposed to conduct AIGC inferences for clients\cite{liu2023deep}. These platforms accommodate hundreds of ASPs contributing their computational and storage resources for AIGC service provisions. However, these ASPs have no financial compensation, potentially impacting their long-term commitment to the platform \cite{liu2023deep}. Moreover, clients might not be aware of ASPs' private information (e.g., local AIGC model complexity and quality) due to information asymmetry and possess subjective perceptions concerning personalized AIGC services, which affects the service quality of ASPs to meet the demands of clients \cite{liu2023deep, fan2023learning}. Therefore, it is necessary to design a user-centric incentive mechanism to motivate ASPs to sustainably provide high-quality AIGC services. Some studies have been conducted to design incentive mechanisms to motivate ASPs\cite{lai2023resource, liu2023deep}. For instance, the authors in \cite{liu2023deep} proposed a diffusion-based contract model to incentivize ASPs for mobile AIGC services. However, none of the studies considers the subjective behavior of decision-makers, which may make the incentive mechanism unrealistic.

%第四段，就是总结工作，可以去参考我那篇TCCN
To address the above challenges, in this paper, we develop a user-centric incentive mechanism framework for edge AIGC services in 6G-IoT networks. Specifically, we first propose a contract theory model to incentivize ASPs to provide AIGC services to clients. Considering that the client may behave irrationally when facing uncertain and risky circumstances, e.g., uncertainty regarding the reliability and consistency of AIGC services, we utilize Prospect Theory (PT) to capture the subjective utility of the client, enhancing the reliability of the proposed contract model in practice\cite{kang2023blockchain}. Given the demonstrated effectiveness of the GDM-based approach in generating optimal contracts \cite{10409284,du2023deep}, we utilize GDMs to generate optimal contracts, thus motivating ASPs for edge AIGC services. \textit{To the best of our knowledge, this is the first work to adopt GDMs for the user-centric incentive mechanism design under PT}. The main contributions of this paper are summarized as follows:

\begin{itemize}
    \item We design a novel user-centric incentive mechanism framework for edge AIGC services in 6G-IoT networks. The framework reveals the symbiotic interaction between edge AIGC and 6G-IoT networks, in which 6G-IoT networks can provide large amounts of various data for GAI model fine-tuning and promote edge AIGC services, and edge AIGC can empower intelligent IoT applications based on the ability of real-time decision-making.
    \item We propose a contract theory model to motivate ASPs to provide AIGC services to clients, which can mitigate information asymmetry between ASPs and a client. Given the potential for the irrational behavior of the client in response to personalized AIGC services in uncertain and risky environments, we utilize PT to more accurately capture the utility of the client, and the optimal contract is determined by maximizing its subjective utility.
    \item We adopt a diffusion-based Soft Actor Critic (SAC) algorithm to capture the high-dimensional and complex of the formulated problem. This algorithm adapts to the continuous action space and can generate optimal contracts under PT. To evaluate the effectiveness of the proposed contract model and the algorithm, we perform numerical analysis and demonstrate that the proposed GDM-based scheme outperforms Deep Reinforcement Learning (DRL)-based schemes.

\end{itemize}

The rest of this paper is organized as follows. In Section \ref{RW}, we review related literature and outline the preliminaries of PT and GDMs. In Section \ref{framework}, we introduce the user-centric incentive mechanism framework for edge AIGC services in 6G-IoT networks. In Section \ref{Problem}, we propose a contract model under PT to motivate ASPs to provide AIGC services. In Section \ref{Optimial_Contract}, we propose the diffusion-based SAC algorithm for optimal contract design under PT. In Section \ref{Results}, we conduct an evaluation of the proposed model and the algorithm. Section \ref{Conclusion} concludes this paper.

\section{Related Work and Preliminaries}\label{RW}
In this section, we delve into state-of-the-art technological domains, investigating the convergence of 6G and IoT, the emergence of edge AIGC, the intricacies of diffusion-based incentive mechanisms, and the foundational principles underlying PT and GDMs.

\begin{table}[t]
\renewcommand{\arraystretch}{1.6}
\caption{Key Mathematical Notations of this Paper}
% \centering
\begin{tabular}{m{0.8cm}|m{7.0cm}} %14.5
\toprule[1.5pt]
\hline
\multicolumn{1}{c|}{\textbf{Notation}}  & \multicolumn{1}{c}{\textbf{Definition}} \\ \hline
$\theta_k$ &  The $k$-th type ASP, associating with its local AIGC model  \\ \hline
$L_k$ & Edge AIGC service latency requirement to type-$k$ ASPs \\ \hline
$R_k$ & Reward to type-$k$ ASPs for edge AIGC service provisions  \\ \hline
$a$ & Unit resource cost of edge AIGC service provisions  \\ \hline
$L_{max}$ & Maximum tolerant edge AIGC service latency   \\ \hline
$f$ & Weight parameter about the incentive $R_k$ of type-$k$ ASPs   \\ \hline
$\eta$ & Loss aversion coefficient\\ \hline
$U_{ref}$ & Reference point for all types of ASPs\\ \hline
$e_1$ & Parameter regarding the quality of AIGC model inference  \\ \hline
$e_2$ & Parameter regarding the latency spent for edge AIGC service delivery \\ \hline
$z_1$ & Factor indicating the effect of inference quality\\ \hline
$z_2$ & Factor indicating the effect of service latency \\ \hline
$Q_k$ & The proportion of type-$k$ ASPs in the AIGC market \\ \hline
$\pi_\omega$ & Contract design policy with parameters $\omega$\\ \hline
$\boldsymbol{\epsilon}_\omega$ & Contract generation network with parameters $\omega$\\ \hline
$q_\varphi$ & Contract quality network with parameters $\varphi$\\\hline
$\pi_{\omega '}$ & Target contract design policy with parameters $\omega '$\\\hline
$\boldsymbol\epsilon '_{\omega '}$ & Target contract generation network of $\boldsymbol{\epsilon}_\omega$ with parameters $\omega '$\\\hline
$q_{\varphi '}'$ & Target contract quality network of $q_\varphi$ with parameters $\varphi '$\\\hline
$\Phi^{0}$ & Optimal contract design \\\hline
% $\bar{\boldsymbol{{h}}}^{(graph)}$& The graph embedding in GAT model\\\hline
\bottomrule[1.5pt]
\end{tabular}
\end{table}

\subsection{6G and Internet of Things}
% 简单介绍一下6G与IoT融合的好处，可查阅一下文献 or GPT
% 列举三到四个6G与IoT融合的工作
% The Internet of Things (IoT) refers to a network of physical objects  embedded with sensors, software, and connectivity capabilities that enable them to collect and exchange data over the internet. These objects can range from everyday devices like smartphones, wearables, and home appliances to industrial machinery, vehicles, and infrastructure components. However,
The continuous upgrading of IoT applications presents challenges such as application proliferation, data intensity, and computational requirements\cite{9369324}. Consequently, the current wireless communication technology fails to adequately meet the needs of large-scale IoT applications, and wireless communications and networking require technological advancement and evolution toward 6G networks\cite{9369324}. With the advantages of 6G, such as ultra-low latency communication, exceptionally high throughput, and satellite-based customer services, the integration of 6G with IoT has achieved a revolutionary leap in connectivity of IoT devices, providing ultra-high speed services, thus enhancing user experience and service quality in the IoT ecosystem\cite{you2021towards}. In recent years, scholars have increasingly focused on integrating 6G and IoT, exploring the potential synergies between them\cite{9044345,mao2021optimizing,9509294,khalid2023reconfigurable}. In \cite{9044345}, the authors proposed a novel AI-based approach for designing an adaptive security specification approach for 6G IoT networks, which can address the challenges related to data privacy and confidentiality in IoT by considering the connectivity of IoT devices to cellular networks through various frequency bands. The authors in \cite{mao2021optimizing} %underscored IoT challenges in remote areas attributable to satellite limitations, emphasizing the pivotal role of edge computing, machine learning, and energy harvesting for 6G. They 
highlighted the integration of Unmanned Aerial Vehicles (UAVs) and satellites to furnish edge and cloud computing services to wireless-powered IoT devices, leveraging Terahertz bands and deep learning for task optimization, thereby showcasing the synergy between 6G and IoT technologies. In \cite{9509294}, the authors scrutinized the convergence of 6G and IoT, delving into emerging opportunities and technologies for future IoT applications such as autonomous driving IoT, healthcare IoT, and industrial IoT. The above works extend the role of 6G in facilitating IoT applications and provide insights into potential directions for further exploration. 
% The authors in \cite{ahad2023perspective} proposed a vision of future intelligent healthcare that combines 6G and IoT to overcome current limitations such as network performance, security issues, and so on. 
%In \cite{mishra20236g}, the authors examined the potential of 6G-IoT-enable communication networks in facilitating sustainable smart cities, presenting an intelligent framework. They also outlined the challenges and identified research directions for future 6G-IoT wireless communication studies within the realm of smart cities.

\subsection{Edge AI-Generated Content}
%involves utilizing AI models like ChatGPT, DALL-E-2, and Codex
% 简单介绍一下AIGC，然后目前中心化架构会导致服务时延增加，所以出现Edge AIGC,然后介绍一下Edge AIGC的工作，可以重点关注我们组里面（丙坤那篇）和dusit那边的工作，大概两到三篇，然后再介绍一到两篇别人的工作
AIGC is an automated method to efficiently generate various forms of content such as images, music, and natural language, by extracting intent information from human instructions\cite{cao2023comprehensive}. Driven by large-scale models and multi-model interactions, AIGC has recently made significant progress, enhancing generative results and bringing new challenges and future avenues for exploration in the discipline \cite{10268594}. However, cloud-based AIGC pre-training models, e.g., GPT-3 for ChatGPT and GPT-4 for ChatGPT Plus, lead to high latency due to their remote natures, which drive a shift towards deploying interaction-intensive AI services on edge networks, termed as edge AIGC, to mitigate service latency \cite{xu2024unleashing}. Edge AIGC integrates GAI models with mobile edge computing\cite{xu2023edge}, enabling the provision of low-latency AI services by executing AI models on mobile devices or edge servers. The authors in \cite{10268594} conducted a comprehensive review of the developments of GAI and edge-cloud computing, and explored the technical challenges associated with scaling up solutions using edge-cloud collaborative systems through two exemplary GAI applications. 
% The authors in \cite{10172151} introduced a novel collaborative distributed diffusion-based AIGC framework to advance the ubiquity of AIGC services. The proposed framework streamlines the execution of AIGC tasks, enhancing the optimization of edge computation resource utilization. 
In \cite{10172151}, a novel collaborative distributed diffusion-based AIGC framework was introduced to promote the realization of ubiquitous AIGC services. The framework can balance the computational load, reduce latency, and achieve high-quality content generation, streamlining task execution and optimizing edge computation resource usage. In \cite{wen2023freshness}, the authors proposed a contract theory model based on age of information to incentivize fresh data sharing among UAVs, thus ensuring the quality of mobile AIGC services. Although edge AIGC achieves low-latency services, challenges such as resource allocation optimization and latency reduction remain, highlighting the necessity for further research to fully exploit the potential of AIGC in mobile edge networks.

% The field of AIGC has witnessed significant advancements, fueled by large-scale models and multimodal interactions. The integration of GAI models with mobile edge computing has facilitated the emergence of low-latency AI services. Despite significant progress, challenges such as optimizing resource allocation and reducing latency persist, underscoring the need for additional research to fully harness the potential of AIGC in mobile edge networks.

\subsection{Diffusion-based Incentive Mechanisms}
% 简单介绍一个GDM，然后简单说明一下GDM求解network optimization problem的原理，列举三到四篇关于GDM求解incentive mechanisms的工作，目前这个工作应该还不多（可以稍微提一下这个），然后列举丙坤（斯坦伯格），我（契约），杜博那边的工作，但上述工作均没有考虑主方面对不确定环境下的行为（去看看展望理论的应用场景，我那篇tccn），然后引出我们要用展望理论
GAI represents a paradigm shift beyond the confines of traditional AI. While traditional AI models primarily analyze or categorize existing data, GAI exhibits the remarkable capability to generate entirely new data, spanning various domains such as text, images, and audio \cite{du2023deep}. As a prominent example in this area, GDMs can generate optimal solutions without relying on annotated data or expert knowledge\cite{lai2023resource}, making them invaluable in various incentive mechanisms, such as Stackelberg game, auction theory, and contract theory, to derive optimal strategies \cite{du2023deep}. Currently, a few works have been conducted on utilizing GDMs for incentive mechanism design\cite{lai2023resource,wen2024generative,nguyen2024generative}. In \cite{lai2023resource}, the authors introduced a new paradigm named generative mobile edge networks by integrating GAI with mobile edge networks. They proposed a Stackelberg model to solve the resource allocation problem and leveraged GDMs to obtain optimal solutions. The authors in \cite{wen2024generative} presented a novel framework for a secure incentive mechanism based on GAI, wherein they employ GDMs for designing the incentive mechanism. Additionally, they utilized GDMs to generate efficient contracts, thus motivating users to contribute high-quality sensing data. The authors in \cite{nguyen2024generative} put forward a novel approach by leveraging GDMs to achieve optimization in the blockchain. They suggested that GDMs can be effectively used to generate high-quality data for evaluation purposes, as well as to generate solutions that optimize blockchain consensus mechanisms and network parameters. The aforementioned works have made contributions to incorporating GDMs into incentive mechanism design. However, a significant aspect that remains unexplored in these endeavors is the behavioral considerations of decision-makers in uncertain environments when participating in incentive mechanisms. In this paper, we focus on adopting GDMs for incentive mechanism design, considering the subjective behavior of decision-makers.

\subsection{Preliminaries of Prospect Theory and Generative Diffusion Models}
%介绍一下展望理论
\subsubsection{Prospect Theory}
According to traditional decision-making theory, decision-makers are considered to be consistently rational, optimizing their decision-making process to maximize personal utility based on Expected Utility Theory (EUT)\cite{ebrahimigharehbaghi2022application}. However, in uncertain and risky environments, decision-makers may deviate from rationality and exhibit irrational behavior. In such contexts, PT offers a more suitable framework for understanding decision-making as it accounts for the impact of uncertainty and risk on the choices of individuals \cite{kang2023blockchain}. 
% The authors in \cite{huang2021efficient} presented a contract-based incentive mechanism design for Parked Vehicles in a mobile edge computing scenario, where an offloading service provider acts on behalf of offloading users. Drawing upon contract theory and PT, the authors modeled the subjective evaluation of computing offloading utility by the users. They aimed to derive an optimal contract that maximizes subjective utility, considering the presence of information asymmetry. 
We introduce the effect of two key notions from PT as follows:
\begin{itemize}
\item \textit{Probability weighting effect:} PT differs from traditional decision-making theories, i.e., EUT, by incorporating subjective probabilities to determine the weight assigned to each potential outcome. Subjective probability, derived from objective probability, introduces a psychological bias that leads to underestimating high-probability events and overestimating low-probability events \cite{ye2022incentivizing}. 
\item \textit{Utility framing effect:} In PT, decision-makers utilize reference points to categorize outcome returns as either gains or losses. For instance, a decision-maker may establish a reference point based on a specific profit goal. If the actual outcome falls short of this goal, it is regarded as a loss, whereas surpassing the goal is seen as a gain \cite{kang2023blockchain,zhang2018wireless}.
\end{itemize}

The utility of EUT is defined as $U_{EUT}=\sum_{i=1}^I Q_i U_{i,EUT}$, where $Q_i$ and $U_{i,EUT}$ are the objective probability and the utility of alternative $i$, respectively, and $I$ is the total number of alternatives. Based on the above two key notions of PT, the utility of PT is expressed as $U_{PT}=\sum_{i=1}^I H(Q_i) U_{i,PT}$, where $H(\cdot)$ is the inverse S-shape probability weighting function of the objective probability $Q$. Specifically, $H(Q_i)=\exp(-(-\log(Q_i))^\alpha)$ is defined as the subjective probability of alternative $i$, where $\alpha$ represents a rational coefficient that how the subjective evaluation distorts objective probabilities \cite{eec14168-5714-3ca8-b073-d038266f2734}. Thus, $U_{i,PT}$ is expressed as
\begin{equation} \label{PT}
U_{i,PT}= \left\{ \begin{aligned}
(U_{i,EUT}-U_{i, ref})^{\beta^{+}},\: U_{i,EUT}\geq U_{i, ref},\\
- \gamma (U_{i,ref}-U_{i,EUT})^{\beta^{-}},\: U_{i,EUT}<U_{i,ref},\\
\end{aligned} \right.
\end{equation}
where $\beta^{+}, \beta^{-}\in (0,1]$ are weighting factors representing gain and loss distortion, respectively, $\gamma\geq 0$  is a loss aversion coefficient, and $U_{i, ref}$ is a reference point for classifying the utility of $U_{i,EUT}$ into either gain or loss \cite{kang2023blockchain}.

\subsubsection{Generative Diffusion Models}
GDM is extensively employed in image generation and also shows strong potential in decision-making scenarios, which can effectively represent complex dynamics. Notably, GDMs demonstrate scalability over the long term and integrate additional conditional variables, such as constraints. In the realm of DRL, GDMs serve as a policy representation method, adept at capturing multi-modal action distribution and enhancing the performance of offline RL tasks \cite{du2023deep}. Based on the initial input, GDMs gradually introduce Gaussian noise through a forward diffusion process and utilize a denoising network to iteratively approximate real sample $x \sim q(x)$ through a series of estimation steps, where $q(x)$ denotes the data distribution \cite{du2023deep,yang2023diffusion}. Subsequently, the denoising network undergoes training to reverse the noise process and restore the data and content, enabling the generation of new data. The detailed description of the forward and reverse diffusion process is as follows:
\begin{itemize}
    \item \textit{Forward diffusion process:} Considering a data distribution $x_0 \sim q(x_0)$, the forward process can be modeled as a Markov process with $T$ steps. After $T$ interactions, a systematic addition of Gaussian noise is performed on the initial sample $x_0$, leading to the generation of a series of samples $x_1,x_2,\ldots,x_T$ with transition kernel $q(x_t|x_{t-1})$.  By employing the chain rule of probability and leveraging the Markov property\cite{ho2020denoising}, the joint distribution of $x_1,x_2,\ldots,x_T$ conditioned on $x_0$ can be deposed $q(x_1,x_2,\ldots,x_T|x_0)$ into 
    \begin{equation}
        q(x_1,x_2,\ldots,x_T|x_0)=\prod_{t=1}^T q(x_t|x_{t-1}),
    \end{equation}
    where
    \begin{equation}
        q(x_t|x_{t-1})=\mathcal{N}(x_t;\boldsymbol\mu_t=\sqrt{1-\delta_t}x_{t-1},\boldsymbol\Sigma_t=\delta_t\textbf{I}),
    \end{equation}
    where $\boldsymbol\mu_t$ and $\boldsymbol{\Sigma}_t$ represent the mean and variation of the normal distribution, respectively, $\textbf{I}$ is the identity matrix, and $\delta_t\in(0,1)$ is a hyperparameter chosen before model training. By defining $\chi_t=1-\delta_t$ and $\bar{\chi_t}=\prod_{j=0}^t\delta_j$, $q(x_t|x_{t-1})=\mathcal{N}(x_t;\sqrt{\bar{\chi_t}}x_0,(1-\chi_t)\textbf{I})$ \cite{zhang2023text}. Given $x_0$, by sampling the Gaussian vector $\boldsymbol{\epsilon}_0,\ldots,\boldsymbol{\epsilon}_{t-1} \sim \mathcal N (\textbf{0}, \textbf{I})$ and applying the transformation, where $\textbf{0}$ represents that the mean of the normal distribution is zero, $x_t$ can be obtained as 
    \begin{equation}
        x_t=\sqrt{\bar{\chi_t}}x_0+\sqrt{(1-\bar\chi_t)}\boldsymbol{\epsilon}_0,
    \end{equation}
    
    \item \textit{Reverse diffusion process:} By learning the inverse distribution $q(x_{t-1}|x_t)$, $x_t$ can be sampled from the standard normal distribution $\mathcal{N}(\textbf{0},\textbf{I})$ by an inverse process to sample from $q(x_0)$. However, accurately estimating the statistical properties of $q(x_{t-1}|x_t)$ involves a complex computation of the data distribution, which is a challenging task. Therefore, the parametric model $p_\omega$ can be used to approximate the estimation of $q(x_{t-1}|x_t)$\cite{ho2020denoising}, i.e.,
    \begin{equation}\label{p}
        p_\omega(x_{t-1}|x_t)=\mathcal{N}(x_{t-1};\boldsymbol\mu_\omega(x_t,t),\boldsymbol\Sigma_\omega(x_t,t)),
    \end{equation}
    where $\omega$ represents the model parameters, and the mean $\boldsymbol\mu_\omega(x_t,t)$ and the variance $\boldsymbol\Sigma_\omega(x_t,t)$ are parameterized by deep neural networks. Thus, the trajectory from $x_T$ to $x_0$ is expressed as \cite{du2023deep}
    \begin{equation}
        p_\omega(x_0,x_1,\ldots,x_T)=p_\omega(x_T)\prod_{t=1}^Tp_\omega(x_{t-1}|x_t).
    \end{equation}
    Adding conditional information (e.g., $g$) during the denoising process, $p_\omega(x_{t-1}|x_t,g)$ can be modeled as a noise prediction model, and the covariance matrix is fixed as
    \begin{equation}\label{sigma}
        \boldsymbol\Sigma_\omega(x_t,g,t)=\delta_t\textbf{I},
    \end{equation}
    and the mean is constructed as
    \begin{equation}\label{mu}
        \boldsymbol\mu_\omega(x_t,g,t)=\frac{1}{\sqrt{\chi_t}}\bigg(x_t-\frac{\delta_t}{\sqrt{1-\bar{\chi_t}}}\boldsymbol\epsilon_\omega(x_t,g,t)\bigg).
    \end{equation}
    First sample $x^T \sim \mathcal N (\textbf{0}, \textbf{I})$, then sample from the reverse diffusion chain parameterized by $\omega$ as
    \begin{equation}\label{sample}
        x_{t-1}|x_t=\frac{x_t}{\sqrt{\chi_t}}-\frac{\delta_t}{\sqrt{\chi_t(1-\bar{\chi_t})}}\boldsymbol\epsilon_\omega(x_t,g,t)+\sqrt{\delta_t}\boldsymbol\epsilon,
    \end{equation}
    where $\boldsymbol\epsilon \sim \mathcal N (\textbf{0}, \textbf{I})$. By ignoring specific weight terms, the original loss function can be simplified as\cite{ho2020denoising}
    \begin{equation}
        {L}_t=\Bbb{E}_{x_0,t,\boldsymbol\epsilon}\big[\|\boldsymbol\epsilon-\boldsymbol\epsilon_\omega(\sqrt{\bar{\chi_t}}x_0+\sqrt{1-\bar{\chi_t}}\boldsymbol\epsilon,t)\|^2\big].
    \end{equation}
    %This indicates that the model is not predicting the mean of the distribution, but rather the noise $\boldsymbol\epsilon$ at each time step $t$.
\end{itemize}

\begin{figure*}[t]
\centering
\includegraphics[width=0.95\textwidth]{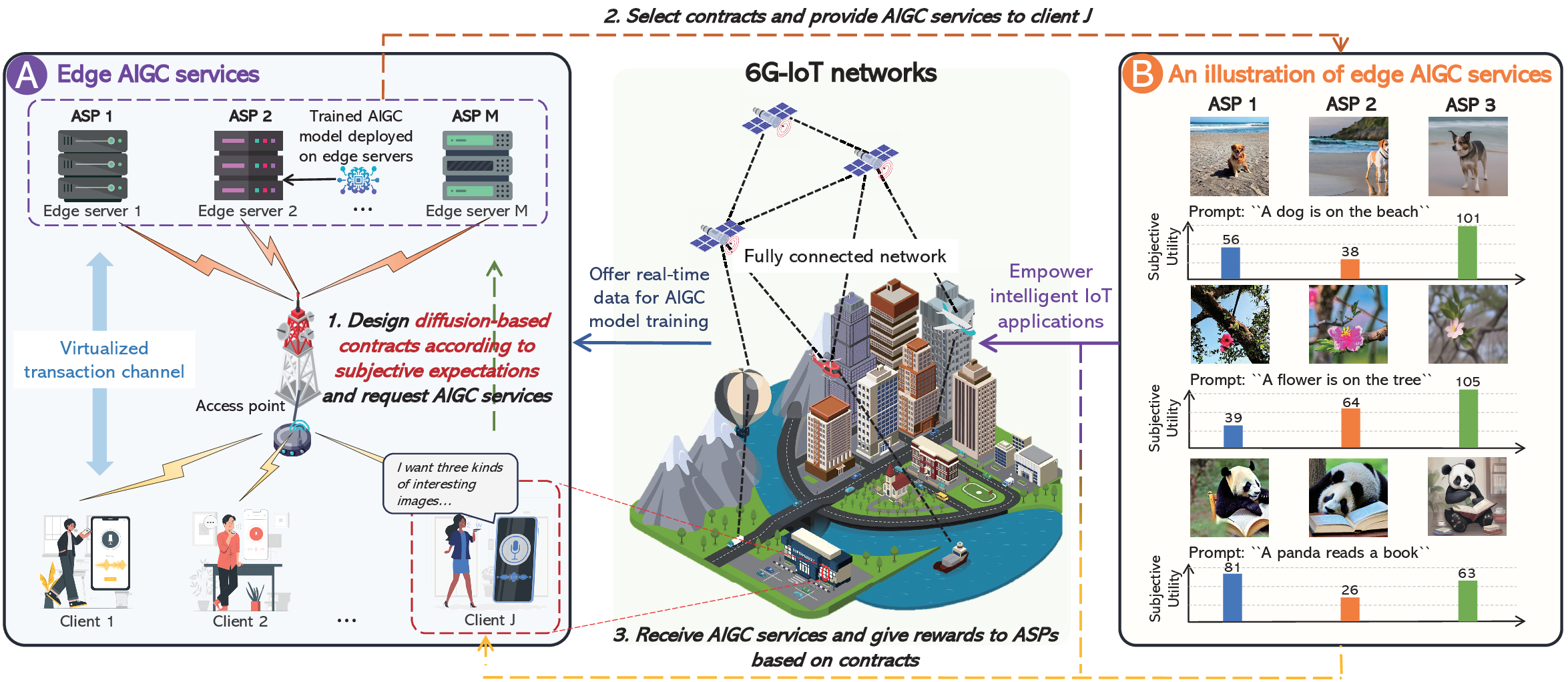}
\captionsetup{font=footnotesize}
\caption{A user-centric incentive mechanism framework for edge AIGC services in 6G-IoT networks. \textbf{Part A} is the network architecture of ASPs employing edge servers to deploy AIGC models for providing AIGC services to clients; \textbf{Part B} shows an illustration of edge AIGC services and presents variation in the subjective utility of a client from the same text prompt on various ASPs.}
\label{System}
\end{figure*}

\section{User-Centric Incentive Mechanism Framework for Edge AIGC Services in 6G-IoT Networks}\label{framework}
In this section, we propose a user-centric incentive mechanism framework for edge AIGC services in 6G-IoT networks, as shown in Fig. \ref{System}.

The framework reveals the symbiotic interaction between 6G-IoT networks and edge AIGC. On the one hand, 6G-IoT networks, with their ultra-high-speed connectivity, provide various real-time data collected by IoT devices to the edge devices that deploy trained AIGC models, enabling high-quality model inferences\cite{wen2024generative}. On the other hand, edge AIGC leverages the computation power and intelligence of edge devices to process and analyze collected data and provide real-time decision-making strategies to IoT devices, thus empowering intelligent IoT applications\cite{wen2024generative}. In the following part, we describe the process of incentivizing ASPs for user-centric AIGC service provisions, as illustrated in Part A and Part B of Fig. \ref{System}.
\begin{itemize}
    \item \textbf{\textit{Step 1. Design diffusion-based contracts and request AIGC services:}} Clients first design diffusion-based contracts according to their subjective expectations and the computational resources required for their tasks\cite{10409284}. Then, they upload their generation requests via mobile devices to edge servers, where edge servers that deploy trained AIGC models act as ASPs\cite{liu2023deep}. To meet the demands for AIGC services, the Software-Defined Network (SDN) orchestrator would create virtual connections between ASPs and clients, allowing clients to directly perform transactions with ASPs\cite{liu2023deep}.
    \item \textbf{\textit{Step 2. Select contracts and provide AIGC services:}} Upon receiving service requests from clients, ASPs first select suitable contracts according to the complexity of their AIGC model. Then, the ASPs fine-tune their AIGC models to specific domains using the real-time dataset uploaded by IoT devices and execute inferences based on the client-provided prompts, thus generating desired content and providing AIGC services to clients\cite{wen2023freshness}.
    \item \textbf{\textit{Step 3. Receive AIGC services and give rewards to ASPs:}} Clients receive AIGC services (e.g., text-to-image generation and voice assistants) from ASPs via wireless communication. Subsequently, the clients give rewards specified in the contracts to the ASPs, and the transaction between them takes place through virtual links constructed by the SDN orchestrator\cite{liu2023deep}.
\end{itemize}

Next, we present how diffusion-based user-centric contracts are designed to incentivize ASPs to effectively provide edge AIGC services to clients.

\section{Problem Formulation}\label{Problem}
In this section, we present an incentive mechanism to motivate ASPs to provide edge AIGC services to clients. We first formulate the utility functions of both ASPs and the client. Then, we propose a contract theory model and validate its feasibility.

\subsection{System Model}

We consider that $M$ ASPs with massive connectivity and a client for edge AIGC services. Due to information asymmetry, the client is not aware of the local AIGC model complexity of each ASP precisely\cite{wen2023task}. The client only has the distribution information about the model complexities of ASPs. Thus, according to the complexity of their local AIGC models, the ASPs can be classified into a set $\mathcal{K}= \left\{ \theta_k:1 \leq k \leq K \right\}$ of $K$ types\cite{liu2023deep}. We denote the $k$-th type of ASPs as $\theta_k$. In non-decreasing order, the ASP types are sorted as
$\theta_1 \leq \theta_2 \leq \dots \leq \theta_K$\cite{wen2023task}, where the higher the model complexity, the larger the model size and the better the inference quality\cite{liu2023deep}. To facilitate explanation, the ASP with type $k$ is called the type-$k$ ASP, and the proportion of type-$k$ ASPs in the AIGC market is denoted as $Q_k,\:\forall k\in \mathcal{K}$.
%The number of the type-$n$ worker is $Q_{n}$ and we have $\sum_{ n \in N } Q_{n} = M$.

\subsection{Utilities of AIGC Service Providers and the Client}

According to \cite{liu2023deep}, \textit{the utility of a type-$k$ ASP} is the difference between the evaluation toward the received reward $r(\theta_k, R_k)$ and its resource cost $c(L_k)$ of participating in edge AIGC service provisions, in which $L_k$ and $R_k$ denote the service latency requirement and the reward to the type-$k$ ASP, respectively. Thus, the utility of a type-$k$ ASP is given by\cite{kang2019toward}
\begin{equation}
    \begin{split}
        U_k^A (L_k, R_k) &= r(\theta_k, R_k)-c(L_k)\\
        &= f \theta_k R_k - a\bigg(\frac{L_k}{L_{max}}\bigg),
    \end{split}
\end{equation}
where $f$ is a pre-defined weight parameter about the incentive $R_k$ of type-$k$ ASPs, $a$ is the unit resource cost of edge AIGC service provisions, and $L_{max}$ represents the maximum tolerant service latency\cite{liu2023deep}.

\begin{figure}[t]
\centering
\includegraphics[width=0.4\textwidth]{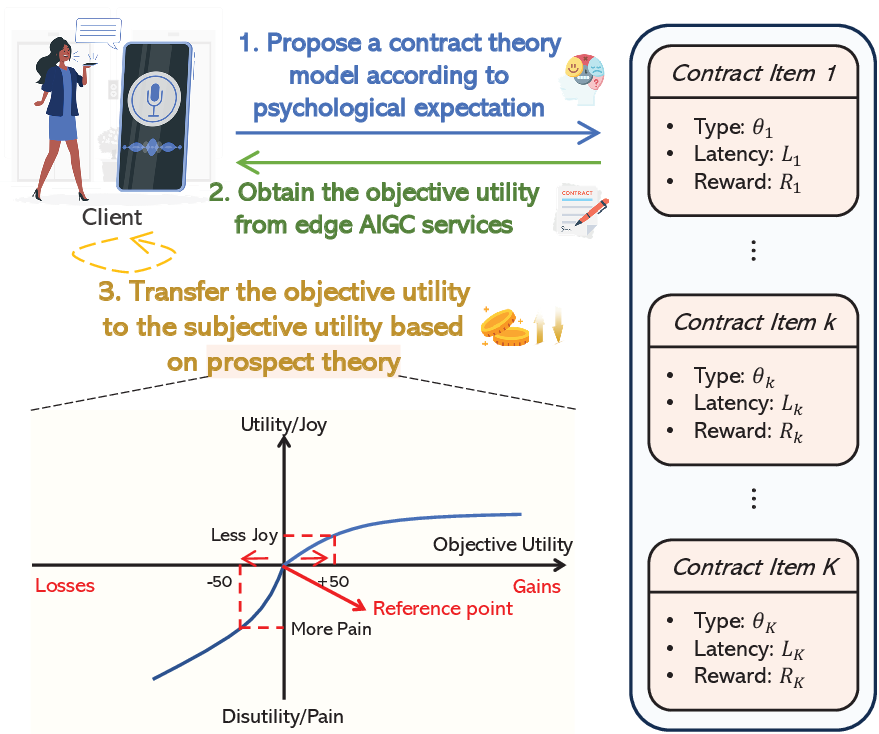}
\captionsetup{font=footnotesize}
\caption{An illustration for the subjective utility of the client based on prospect theory in the contract theory model.}
\label{Prospect}
\end{figure}

Next, we derive \textit{the subjective utility of the client} by using PT. The objective utility of the client towards the type-$k$ ASP is denoted as $U_k^C$, equaling the difference between the revenue for receiving AIGC services within $L_k$ and the reward $R_k$, where the revenue is defined as a general security-latency metric, i.e., $e_1(\theta_k)^{z_1}+e_2(L_k/L_{max})^{z_2}$\cite{liu2023deep,kang2019toward}. Here, $e_1>0$ and $e_2>0$ are pre-defined parameters regarding the quality of AIGC model inference and the latency spent for edge AIGC service delivery\cite{kang2019toward}, respectively. Similarly, $z_1 \geq 1$ and $z_2 \geq 1$ are given factors indicating the effects of inference quality and the latency \cite{kang2019toward}, respectively. This function indicates that the client prefers ASPs with complex AIGC models and high-quality AIGC service delivery within the specified time (i.e., no exceed $L_{max}$). Thus, the objective utility of the client towards type-$k$ ASPs is expressed as
\begin{equation}
    \begin{split}
        U_k^{EUT} = e_1(\theta_k)^{z_1}+e_2\bigg(\frac{L_k}{L_{max}}\bigg)^{z_2}-R_k.
    \end{split}
\end{equation}

Due to information asymmetry, the client only knows the type distribution but cannot exactly know the private type of each ASP, which results in uncertainty when the client makes decisions. Therefore, the client overcomes the information asymmetry problem by adopting the EUT to define its overall objective utility as\cite{kang2023blockchain,liu2023deep}:
\begin{equation}\label{F_EUT}
\begin{split}
U_{EUT}^C  = M{\sum^K_{k=1}} Q_{k} U_{k}^{EUT} ,\\
\end{split}
\end{equation} 
where $\sum_{ k=1 }^K Q_{k} = 1$. 

However, when facing uncertain and risky circumstances, the client may behave irrationally and have different risk attitudes, such as risk seeking or risk averse\cite{kang2023blockchain}. Therefore, EUT is not applicable to capture risk attitudes of the client during the uncertain decision-making process. To address this issue, we utilize PT to further model the objective utility of the client, which makes the contract model more practical, as shown in Fig. \ref{Prospect}.
Given a reference point $U_{ref}$ for all types of ASPs, we convert $U_k^{EUT}$ into the subjective utility, which is given by\cite{huang2021efficient, ye2022incentivizing,kang2023blockchain} 
\begin{equation} \label{PT}
U_{k,PT}^C= \left\{ \begin{aligned}
( U_k^{EUT}- U_{ref}  )^{\zeta^{+}}, \: U_k^{EUT}\geq U_{ref},\\
- \eta (  U_{ref} - U_k^{EUT}   )^{\zeta^{-}},\: U_k^{EUT}< U_{ref},\\
\end{aligned} \right.
\end{equation}
where $\zeta^{+}, \zeta^{-}\in (0,1]$ are the weighting factors representing gain and loss distortion, respectively, and $\eta\geq 0$  is a loss aversion coefficient.
% where the parameters $\delta^{+}, \delta^{-} \in (0,1]$ and $\xi \geq 1$ model the risk aversion and loss aversion, respectively.
Based on (\ref{PT}), the overall subjective utility of the client is given by
\begin{equation}\label{F_PT}
\begin{split}
U_{PT}^C = M{\sum^K_{k=1}} Q_{k} U_{k, PT}^C.\\
\end{split}
\end{equation}

\subsection{Contract Formulation}

In the AIGC service market, the ASP types are private information that may not be visible to the client, indicating that there exists an information asymmetry between the client and the ASPs. Contract theory as an economic tool is powerful for designing incentive mechanisms with asymmetric information\cite{wen2023freshness,wen2023task}, which can effectively address such information asymmetry by designing specific contracts for every type of ASPs\cite{liu2023deep}. Here, the client is the leader in designing a contract with a group of contract items, and each ASP selects the best contract item according to its type. The contract item is denoted as $\Phi = \left\{ (L_{k}, R_{k}), k \in \mathcal{K}\right\}$. To ensure that each ASP automatically selects the contract item designed for its specific type, the feasible contract should satisfy the following Individual Rationality (IR) and Incentive Compatibility (IC) constraints\cite{kang2023blockchain}.

\begin{definition}
(Individual Rationality) The type-$k$ ASP achieves a non-negative utility by selecting the contract item $(L_k, R_k)$ corresponding to its type, i.e.,
\begin{equation}\label{IR2}
    \begin{split}
       U_k^A(L_k, R_k)\geq 0, \:\forall k \in \mathcal{K}.
    \end{split}
\end{equation}
\end{definition}

\begin{definition}
(Incentive Compatibility) An ASP of any type prefers to select the contract item $(L_{k}, R_{k})$ designed for
its type rather than any other contract item $(L_{n}, R_{n}),\: n\in \mathcal{K}$, and $n\neq k$, i.e.,
\begin{equation}
    \begin{split}\label{IC1}
        U_k^A(L_k, R_k) \geq U_k^A(L_n, R_n), \:\forall n, k \in \mathcal{K},\: n\neq k.
    \end{split}
\end{equation}
\end{definition}

The IR constraints can encourage ASPs to participate in AIGC service provisions by ensuring that the utility of ASPs is non-negative, and the IC constraints can guarantee that each ASP provides high-quality AIGC services by choosing the optimal contract item designed for its type. Based on the IR and IC constraints, the problem of maximizing the overall subjective utility of the client is formulated as
\begin{equation}\label{problem1}
    \begin{split}
        & \max_{\bm{L}, \bm{R}} U_{PT}^C \\
        %& \quad\:\:\:\text{s.t.} \: (\ref{IR2})\: \text{and}\: (\ref{IC1}),\\
        &\:\:\text{s.t.}\: U_k^A(L_k, R_k)\geq 0, \:\forall k \in \mathcal{K},\\
        &\quad\:\:\:\: U_k^A(L_k, R_k) \geq U_k^A(L_n, R_n), \:\forall n, k \in \mathcal{K},\: n\neq k,\\
        &\quad\:\:\:\: L_k \geq 0, R_k \geq 0, \theta_k > 0,\: \forall k \in \mathcal{K},
    \end{split}
\end{equation}
where $\bm{L} =[L_{k}]_{1 \times K}$ and $\bm{R} =[R_{k}]_{1 \times K}$. 
% There are $N$ IR constraints and $N(N-1)$ IC constraints, making it quite difficult to  solve \textbf{{Problem 1}} directly (\ref{problem1}). 

\subsection{Contract Feasibility}
In this part, we focus on validating the feasibility of the proposed contract and obtaining the optimal reward $R_k^*$ to type-$k$ ASPs. We first derive the following necessary conditions that can be derived from the IR and IC constraints.
\begin{lemma}\label{lem1}
    For any feasible contract $(\bm{L},\bm{R})$, if $\theta_i > \theta_j$, then $R_i > R_j,\: \forall i, j \in \mathcal{K}$.
\end{lemma}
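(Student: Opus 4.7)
The plan is to exploit the two Incentive-Compatibility constraints associated with the pair of types $i$ and $j$, adding them to cancel the latency terms and isolate a product inequality in $(\theta_i - \theta_j)$ and $(R_i - R_j)$. This is the standard ``summing the IC inequalities'' trick from contract theory, and it is well suited to the bilinear form of $U_k^A(L_k,R_k) = f\theta_k R_k - a(L_k/L_{max})$, where only the reward term couples with the type $\theta_k$.

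Concretely, I would first write down IC for type $i$ choosing its own item rather than $j$'s,
\begin{equation*}
f\theta_i R_i - a\bigl(L_i/L_{max}\bigr) \;\geq\; f\theta_i R_j - a\bigl(L_j/L_{max}\bigr),
\end{equation*}
and then the symmetric IC for type $j$,
\begin{equation*}
f\theta_j R_j - a\bigl(L_j/L_{max}\bigr) \;\geq\; f\theta_j R_i - a\bigl(L_i/L_{max}\bigr).
\end{equation*}
Adding these two inequalities causes the $a(L_k/L_{max})$ terms on both sides to cancel identically, leaving $f(\theta_i-\theta_j)(R_i - R_j) \geq 0$. Since $f>0$ and $\theta_i > \theta_j$ by hypothesis, this forces $R_i \geq R_j$.

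The remaining step, and the only subtle one, is upgrading the weak inequality to the strict inequality $R_i > R_j$. I would proceed by contradiction: assuming $R_i = R_j$, substitute back into either of the two IC inequalities to obtain $a(L_i/L_{max}) \leq a(L_j/L_{max})$ from the first and the reverse from the second, which together give $L_i = L_j$. Then the contract items $(L_i,R_i)$ and $(L_j,R_j)$ coincide, contradicting the fact that the contract is designed to distinguish between the two distinct types $\theta_i$ and $\theta_j$ (otherwise the client could not separate them and the ASP characterization $\theta_1 \leq \cdots \leq \theta_K$ with distinct types would be redundant).

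The main obstacle I anticipate is precisely this strictness step, since the algebraic cancellation only yields a weak inequality; one has to invoke the separation requirement on the contract menu (or, equivalently, the type distinctness assumption) to rule out $R_i = R_j$. Everything else is a direct, one-line manipulation of the IC constraints, so the proof itself should be brief once the contradiction argument is articulated cleanly.
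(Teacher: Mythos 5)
Your core argument is exactly the paper's: write the two IC inequalities for the pair $(i,j)$, add them so the $a(L_k/L_{max})$ terms cancel, and conclude $f(\theta_i-\theta_j)(R_i-R_j)\geq 0$. The paper merely dresses this up as a contradiction (assuming $R_i<R_j$ to get $(R_i-R_j)(\theta_i-\theta_j)<0$, which clashes with the summed inequality), whereas you argue directly; the algebra is identical. Your instinct about the strictness step is sound --- and in fact you are being more careful than the paper, whose contradiction only rules out $R_i<R_j$ and silently leaves the case $R_i=R_j$ unexcluded, so its own proof really only delivers $R_i\geq R_j$. However, your proposed fix does not close the gap within the paper's framework: feasibility is defined solely by the IR and IC constraints, and a pooling menu with $(L_i,R_i)=(L_j,R_j)$ for distinct types satisfies both, so there is no ``separation requirement'' to invoke. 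Indeed the paper's own Lemma~3 and its Monotonicity definition ($\theta_i\geq\theta_j \Rightarrow R_i\geq R_j$) implicitly acknowledge that only the weak inequality is guaranteed. So treat your derivation up to $R_i\geq R_j$ as the complete, correct content of the lemma, and drop the appeal to type distinctness rather than lean on it.
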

\begin{proof}
    Utilizing the contradiction method, if $R_i < R_j$ when $\theta_i > \theta_j$, we have 
    \begin{equation}\label{lem1_3}
        \begin{split}
            (R_i - R_j)(\theta_i - \theta_j) < 0.
        \end{split}
    \end{equation}
    Based on the IC constraints (\ref{IC1}), we can obtain
    \begin{equation}\label{lem1_1}
        \begin{split}
            f \theta_i R_i - a\bigg(\frac{L_i}{L_{max}}\bigg)\geq f \theta_i R_j - a\bigg(\frac{L_j}{L_{max}}\bigg),
        \end{split}
    \end{equation}
    \begin{equation}\label{lem1_2}
        \begin{split}
             f \theta_j R_j - a\bigg(\frac{L_j}{L_{max}}\bigg)\geq f \theta_j R_i - a\bigg(\frac{L_i}{L_{max}}\bigg).
        \end{split}
    \end{equation}
    Combining (\ref{lem1_1}) and (\ref{lem1_2}), we have
        \begin{equation}
        \begin{split}
            \theta_i R_i+\theta_j R_j - \theta_i R_j - \theta_j R_i \geq 0,
        \end{split}
    \end{equation}
    \begin{equation}\label{lem1_4}
        \begin{split}
             (\theta_i - \theta_j)(R_i - R_j) \geq 0.
        \end{split}
    \end{equation}
    Since (\ref{lem1_3}) is contradiction with (\ref{lem1_4}), we can obtain that if $\theta_i > \theta_j$, then $R_i > R_j$. Thus, the proof is completed.
\end{proof}
\textbf{Lemma \ref{lem1}} indicates that a higher type of ASPs, that is, a more complex AIGC model, leads to higher rewards.

\begin{lemma}\label{lem2}
    For any feasible contract $(\bm{L},\bm{R})$, $R_i > R_j$ if and only if $L_i > L_j,\: \forall i, j \in \mathcal{K}$.
\end{lemma}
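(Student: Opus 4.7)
The plan is to prove the biconditional by extracting a sandwich inequality from the two IC constraints, in the same algebraic style used in the proof of Lemma~\ref{lem1}. The key observation is that the IC constraints for any two types naturally bound the cost difference on both sides by multiples of the reward difference, which forces the signs of $L_i - L_j$ and $R_i - R_j$ to agree.

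First, I would instantiate the IC constraint (\ref{IC1}) twice: once for the type-$i$ ASP (who prefers $(L_i, R_i)$ over $(L_j, R_j)$) and once for the type-$j$ ASP (who prefers $(L_j, R_j)$ over $(L_i, R_i)$). After moving the latency-cost and reward-evaluation terms to opposite sides, these two inequalities can be rewritten in the form
\begin{equation}
f\theta_i (R_i - R_j) \;\geq\; \frac{a}{L_{max}}(L_i - L_j) \;\geq\; f\theta_j (R_i - R_j).
\end{equation}
This chained inequality is the heart of the proof and uses nothing beyond the IC definition and positivity of $a$ and $L_{max}$.

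With this sandwich in hand, both directions are immediate from the positivity of $f$ and the ASP types. For the forward direction, if $R_i > R_j$, then the rightmost expression $f\theta_j(R_i - R_j)$ is strictly positive (since $f, \theta_j > 0$), so the middle term is strictly positive and hence $L_i > L_j$. For the converse, if $L_i > L_j$, then the middle term is strictly positive, forcing the leftmost expression $f\theta_i(R_i - R_j)$ to be strictly positive, whence $R_i > R_j$.

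The main obstacle is essentially bookkeeping rather than mathematical depth: one must be careful to keep the direction of each IC inequality straight when rearranging, since a sign error collapses the sandwich. Beyond that, no additional assumptions or constructions are needed, and Lemma~\ref{lem1} is not invoked, making this lemma a clean corollary of the IC constraints alone.
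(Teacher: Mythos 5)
Your proof is correct and takes essentially the same route as the paper: the two halves of your sandwich inequality $f\theta_i(R_i - R_j) \geq \frac{a}{L_{max}}(L_i - L_j) \geq f\theta_j(R_i - R_j)$ are exactly the two intermediate inequalities the paper derives separately from the type-$i$ and type-$j$ IC constraints, one for each direction of the biconditional. Merging them into a single chain is a purely presentational difference.
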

\begin{proof}
    First, we prove that if $L_i > L_j$, then $R_i > R_j$. Based on the IC constraints (\ref{IC1}), we have
    \begin{equation}\label{lem2_1}
        \begin{split}
            f \theta_i R_i - a\bigg(\frac{L_i}{L_{max}}\bigg)\geq f \theta_i R_j - a\bigg(\frac{L_j}{L_{max}}\bigg).
        \end{split}
    \end{equation}
    Then, we obtain
    \begin{equation}\label{lem2_2}
        \begin{split}
            f \theta_i (R_i-R_j) \geq a\bigg(\frac{L_i-L_j}{L_{max}}\bigg).
        \end{split}
    \end{equation}
    Since $L_i > L_j$, we have
        \begin{equation}\label{lem2_3}
        \begin{split}
            f \theta_i (R_i-R_j) \geq a\bigg(\frac{L_i-L_j}{L_{max}}\bigg)> 0.
        \end{split}
    \end{equation}
    Thus, $R_i > R_j$.
    Next, we prove that if $R_i > R_j$, then $L_i > L_j$. Based on the IC constraints (\ref{IC1}), we have
    \begin{equation}\label{lem2_4}
        \begin{split}
            f \theta_j R_j - a\bigg(\frac{L_j}{L_{max}}\bigg)\geq f \theta_j R_i - a\bigg(\frac{L_i}{L_{max}}\bigg).
        \end{split}
    \end{equation}
    Then, we obtain
    \begin{equation}\label{lem2_5}
        \begin{split}
            a\bigg(\frac{L_i-L_j}{L_{max}}\bigg) \geq f \theta_j (R_i-R_j).
        \end{split}
    \end{equation}
    Since $R_i > R_j$, we have
    \begin{equation}\label{lem2_6}
        \begin{split}
            a\bigg(\frac{L_i-L_j}{L_{max}}\bigg) \geq f \theta_j (R_i-R_j) > 0,
        \end{split}
    \end{equation}
    and thus $L_i > L_j$. The proof is completed.
\end{proof}

\textbf{Lemma \ref{lem2}} shows that under the role of contract, the higher service latency requirement to the ASP,
%which indicates the ASP has more complex AIGC models and better quality inference services, 
the higher reward received by the ASP.

\begin{lemma}\label{lem3}
    For any feasible contract $(\bm{L},\bm{R})$, $R_i = R_j$ if and only if $L_i = L_j,\: \forall i, j \in \mathcal{K}$.
\end{lemma}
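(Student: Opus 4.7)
The plan is to establish both directions by contradiction, leveraging \textbf{Lemma \ref{lem2}} which already characterizes the strict-inequality case. Since \textbf{Lemma \ref{lem2}} states that $R_i > R_j$ if and only if $L_i > L_j$, the equality case should follow almost immediately because any strict ordering in one variable forces a strict ordering in the other.

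For the forward direction, I would assume $R_i = R_j$ and suppose for contradiction that $L_i \neq L_j$. Without loss of generality take $L_i > L_j$; then \textbf{Lemma \ref{lem2}} yields $R_i > R_j$, contradicting $R_i = R_j$. The symmetric case $L_i < L_j$ is handled identically by swapping indices. For the reverse direction, I would assume $L_i = L_j$ and suppose for contradiction that $R_i \neq R_j$; again take $R_i > R_j$ without loss of generality, apply \textbf{Lemma \ref{lem2}} to get $L_i > L_j$, contradicting $L_i = L_j$.

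As an alternative (more self-contained) route, I could work directly from the IC constraints \eqref{IC1}. Substituting $R_i = R_j$ into the two-sided IC inequalities
\begin{equation*}
f \theta_i R_i - a\!\left(\tfrac{L_i}{L_{max}}\right) \geq f \theta_i R_j - a\!\left(\tfrac{L_j}{L_{max}}\right),
\end{equation*}
\begin{equation*}
f \theta_j R_j - a\!\left(\tfrac{L_j}{L_{max}}\right) \geq f \theta_j R_i - a\!\left(\tfrac{L_i}{L_{max}}\right),
\end{equation*}
collapses the reward terms and forces $L_i \leq L_j$ and $L_j \leq L_i$ simultaneously, hence $L_i = L_j$. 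Substituting $L_i = L_j$ instead collapses the latency cost terms and yields $\theta_i R_i \geq \theta_i R_j$ and $\theta_j R_j \geq \theta_j R_i$, which together with $\theta_i, \theta_j > 0$ force $R_i = R_j$.

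I do not expect any real obstacle here; the statement is essentially a corollary of \textbf{Lemma \ref{lem2}}, and the main drafting choice is whether to invoke that lemma as a black box or to re-derive the equalities from the IC constraints. I would favor the contradiction-via-\textbf{Lemma \ref{lem2}} route for brevity, since \textbf{Lemma \ref{lem2}} is immediately preceding and already carries the needed algebra. The one small point to be careful about is ensuring both directions are covered explicitly, since the ``only if'' half is sometimes glossed over but is needed later for pooling the IR/IC constraints when reducing the feasible-contract conditions to a smaller set of active constraints.
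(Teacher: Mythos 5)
Your proposal is correct; note that the paper actually omits this proof entirely, stating only that it is ``similar to that of \textbf{Lemma \ref{lem2}},'' and your second, self-contained route (substituting $R_i=R_j$ or $L_i=L_j$ into the two-sided IC inequalities to collapse them into matching $\leq$ and $\geq$ bounds) is exactly the argument the paper is gesturing at. Your preferred shortcut via contradiction with \textbf{Lemma \ref{lem2}} is also valid and equally acceptable.
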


Since the proof of \textbf{Lemma \ref{lem3}} is similar to that of \textbf{Lemma \ref{lem2}}, we omit it for brevity. \textbf{Lemma \ref{lem3}} indicates that the ASPs with the same service latency requirement will obtain the same amount of rewards, together with \textbf{Lemma \ref{lem1}} and \textbf{Lemma \ref{lem2}}, we can define the following relationship.

\begin{definition}\label{mono}
    (Monotonicity) If $\theta_i \geq \theta_j,\:\forall i, j \in \mathcal{K}$, we have $R_i \geq R_j$.
\end{definition}

We next proceed with the elimination of IC and IR constraints. Before that, the IC constraints between type-$i$ and type-$m$ ASPs, where $m \in \{1,2,\ldots,i-1\}$, are defined as the Downward Incentive Constraints (DICs)\cite{hou2017incentive}, and the IC constraints between type-$i$ and type-$n$ ASPs, where $n \in\{i+1, i+2, \ldots, K\}$, are defined as the Upward Incentive Constraints (UICs)\cite{hou2017incentive}. Especially, the IC constraint between type-$i$ and type-$i-1$ ASPs is defined as the Local Downward Incentive Constraint (LDIC), and the 
IC constraint between type-$i$ and type-$i+1$ ASPs is defined as the Local Upward Incentive Constraint (LUIC). Based on the above analysis, we can obtain the following lemmas.

\begin{lemma}\label{lem4}
    (Individual Rationality Constraint Elimination) Based on the IC constraints (\ref{IC1}), the IR constraints can be reduced as
    \begin{equation}
        \begin{split}
            f \theta_1 R_1 - a\bigg(\frac{L_1}{L_{max}}\bigg)\geq 0.
        \end{split}
    \end{equation}
\end{lemma}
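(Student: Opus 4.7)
The plan is to establish that the single condition $f\theta_1 R_1 - a(L_1/L_{max}) \geq 0$ is equivalent, in the presence of the IC constraints, to all $K$ IR constraints simultaneously. One direction is immediate: if every type-$k$ IR constraint holds, then in particular the type-$1$ IR constraint holds. The real content is the converse: assuming only the type-$1$ IR plus the full IC system, every other IR must follow.

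To show the converse, I would fix an arbitrary $k \in \{2,\dots,K\}$ and invoke the IC constraint (\ref{IC1}) between type-$k$ and type-$1$, which yields
\begin{equation}
U_k^A(L_k, R_k) \;\geq\; U_k^A(L_1, R_1) \;=\; f\theta_k R_1 - a\bigg(\frac{L_1}{L_{max}}\bigg).
\end{equation}
By the ordering $\theta_1 \leq \theta_2 \leq \cdots \leq \theta_K$ together with $R_1 \geq 0$ (guaranteed by the problem's non-negativity constraint on rewards in (\ref{problem1})), I have $f\theta_k R_1 \geq f\theta_1 R_1$, so
\begin{equation}
U_k^A(L_k,R_k) \;\geq\; f\theta_1 R_1 - a\bigg(\frac{L_1}{L_{max}}\bigg) \;\geq\; 0,
\end{equation}
where the last inequality is the assumed type-$1$ IR. Since $k$ was arbitrary, all IR constraints are satisfied.

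The two ingredients driving the argument are the type ordering (which lets $\theta_1$ bound $\theta_k$ from below) and the non-negativity of $R_1$ (which is needed to make the multiplication by $\theta_k/\theta_1$ monotone); neither step is delicate, and the ``hard part'' is really just identifying the correct IC inequality to invoke, namely the one from type-$k$ down to type-$1$ rather than any intermediate chain. No appeal to Lemmas \ref{lem2} or \ref{lem3} is needed here, although they will be useful for eliminating the IC constraints themselves in the subsequent reduction. Thus the IR system collapses to the single boundary condition stated in the lemma.
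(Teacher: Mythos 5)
Your proposal is correct and follows essentially the same route as the paper's proof: invoke the IC constraint between type-$k$ and type-$1$, then use the ordering $\theta_1 \leq \theta_k$ to lower-bound $f\theta_k R_1$ by $f\theta_1 R_1$ and chain the inequalities down to the type-$1$ IR condition. Your explicit remark that $R_1 \geq 0$ is needed for the monotonicity step is a small point of added care that the paper leaves implicit, but the argument is otherwise identical.
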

\begin{proof}
    Based on the IC constraints, we can obtain
    \begin{equation}\label{lem4_1}
        \begin{split}
             f \theta_k R_k - a\bigg(\frac{L_k}{L_{max}}\bigg)\geq f \theta_k R_1 - a\bigg(\frac{L_1}{L_{max}}\bigg),
        \end{split}
    \end{equation}
    where $\forall k \in \{2,\ldots, K\}$. Since $\theta_1\leq\theta_2\leq\cdots\leq\theta_K$, we have
        \begin{equation}\label{lem4_2}
        \begin{split}
             f \theta_k R_1 - a\bigg(\frac{L_1}{L_{max}}\bigg)\geq f \theta_1 R_1 - a\bigg(\frac{L_1}{L_{max}}\bigg),
        \end{split}
    \end{equation}
    Combining (\ref{lem4_1}) and (\ref{lem4_2}), we have
    \begin{equation}\label{lem4_3}
        \begin{split}
             f \theta_k R_k - a\bigg(\frac{L_k}{L_{max}}\bigg)\geq f \theta_1 R_1 - a\bigg(\frac{L_1}{L_{max}}\bigg)\geq 0.
        \end{split}
    \end{equation}
    Based on (\ref{lem4_3}), if the IR constraint of the type-$1$ ASPs holds, the IC constraints for the rest $(K-1)$ types of ASPs can be eliminated. Thus, the proof is completed.
\end{proof}

\begin{lemma}\label{lem5}
    (Incentive Compatibility Constraint Elimination) With monotonicity, the IC constraints can be reduced as the LDIC, i.e.,
    \begin{equation}
        \begin{split}
             f \theta_i R_i - a\bigg(\frac{L_i}{L_{max}}\bigg)\geq f \theta_i R_{i-1} - a\bigg(\frac{L_{i-1}}{L_{max}}\bigg),
        \end{split}
    \end{equation}
    where $\forall i \in \{2,\ldots, K\}$, and the LUIC, i.e.,
    \begin{equation}
        \begin{split}
             f \theta_i R_i - a\bigg(\frac{L_i}{L_{max}}\bigg)\geq f \theta_i R_{i+1} - a\bigg(\frac{L_{i+1}}{L_{max}}\bigg).
        \end{split}
    \end{equation}
\end{lemma}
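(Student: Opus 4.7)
The plan is to show that, once monotonicity (Definition 3) is in force, the LDIC alone implies every Downward Incentive Constraint (DIC), and symmetrically the LUIC implies every Upward Incentive Constraint (UIC). I would split the argument into a downward part and an upward part, each handled by induction on the index gap between the two ASP types being compared, and the key algebraic lever in both parts is the same two-factor monotonicity product $(\theta_i - \theta_j)(R_i - R_j) \geq 0$ that already drove Lemma \ref{lem1}.

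For the downward direction, I would fix $i$ and do induction on $i-m$ for $m \in \{1,\ldots,i-1\}$. The base case $m = i-1$ is exactly the LDIC. For the inductive step, assume the IC constraint between type-$i$ and type-$(m+1)$ holds, and stack it with the LDIC at type-$(m+1)$,
\begin{equation}
f\theta_{m+1} R_{m+1} - a\bigg(\frac{L_{m+1}}{L_{max}}\bigg) \geq f\theta_{m+1} R_m - a\bigg(\frac{L_m}{L_{max}}\bigg).
\end{equation}
Monotonicity gives $R_{m+1} \geq R_m$ and the ordering $\theta_i \geq \theta_{m+1}$ gives $(\theta_i - \theta_{m+1})(R_{m+1} - R_m) \geq 0$, which rearranges to
\begin{equation}
f\theta_i R_{m+1} - f\theta_i R_m \;\geq\; f\theta_{m+1} R_{m+1} - f\theta_{m+1} R_m.
\end{equation}
Substituting into the LDIC at $(m+1)$ upgrades its $\theta_{m+1}$ to $\theta_i$, and then chaining with the inductive hypothesis delivers the DIC between type-$i$ and type-$m$.

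The upward direction is structurally symmetric: I would fix $i$ and induct on $n-i$ for $n \in \{i+1,\ldots,K\}$. Here the base case is the LUIC, and the inductive step combines the LUIC at type-$(n-1)$ with the two-factor monotonicity inequality $(\theta_{n-1} - \theta_i)(R_n - R_{n-1}) \geq 0$ to transfer the relevant inequality from $\theta_{n-1}$ back to $\theta_i$. The main obstacle, and essentially the only subtlety, is sign bookkeeping in this upward part: since monotonicity forces $R_n \geq R_{n-1}$, the differences $R_i - R_n$ that appear in UICs are non-positive, so the direction in which the two inequalities compose must be checked carefully before concluding. Once this is handled, the induction closes and every IC constraint follows from the LDIC, the LUIC, and monotonicity, completing the reduction. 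Since the remaining IR constraints were already absorbed in Lemma \ref{lem4}, the full contract feasibility problem is then equivalent to the finite collection stated in Lemma \ref{lem5}.
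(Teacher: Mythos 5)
Your proposal is correct and follows essentially the same route as the paper: both arguments use the monotonicity product $(\theta_i-\theta_{m+1})(R_{m+1}-R_m)\geq 0$ to lift the local constraint from one type's $\theta$ to a higher (or lower) type's, and then chain the resulting inequalities across adjacent indices until the full DIC (resp.\ UIC) is recovered. The only difference is presentational: you phrase the chaining as an explicit induction on the index gap and spell out the upward direction, whereas the paper iterates ``downward until $\theta_1$'' and dismisses the upward case as symmetric.
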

\begin{proof}
    The $K(K-1)$ IC constraints (\ref{IC1}) can be divided into $\frac{K(K-1)}{2}$ DICs, i.e.,
    \begin{equation}\label{lem5_1}
        \begin{split}
             f \theta_i R_i - a\bigg(\frac{L_i}{L_{max}}\bigg)\geq f \theta_i R_{j} - a\bigg(\frac{L_{j}}{L_{max}}\bigg),
        \end{split}
    \end{equation}
    where $\forall i, j \in \{2,\ldots, K\},\: i>j$, and $\frac{K(K-1)}{2}$ UICs, i.e.,
    \begin{equation}\label{lem5_2}
        \begin{split}
             f \theta_i R_i - a\bigg(\frac{L_i}{L_{max}}\bigg)\geq f \theta_i R_{j} - a\bigg(\frac{L_{j}}{L_{max}}\bigg),
        \end{split}
    \end{equation}
    where $\forall i, j \in \{2,\ldots, K\},\: i<j$. For $\theta_{i-1} < \theta_i < \theta_{i+1},\: \forall i \in \{2,\ldots,K-1\}$, we have
    \begin{equation}\label{lem5_4}
        \begin{split}
            f \theta_{i+1} R_{i+1} - a\bigg(\frac{L_{i+1}}{L_{max}}\bigg)\geq f \theta_{i+1} R_{i} - a\bigg(\frac{L_{i}}{L_{max}}\bigg),
        \end{split}
    \end{equation}
        \begin{equation}
        \begin{split}
            f \theta_{i} R_{i} - a\bigg(\frac{L_{i}}{L_{max}}\bigg)\geq f \theta_{i} R_{i-1} - a\bigg(\frac{L_{i-1}}{L_{max}}\bigg).
        \end{split}
    \end{equation}
    Based on \textbf{Definition} \ref{mono}, we can further obtain
    \begin{equation}
        \begin{split}
            f(\theta_{i+1}-\theta_i)(R_i-R_{i-1})\geq0,
        \end{split}
    \end{equation}
    \begin{equation}
        \begin{split}
            f\theta_{i+1}(R_i-R_{i-1})\geq f\theta_i(R_i-R_{i-1}).
        \end{split}
    \end{equation}
    Thus, we have
    \begin{equation}\label{lem5_3}
        f\theta_{i+1}(R_i-R_{i-1})\geq f\theta_{i}(R_i-R_{i-1})\geq a\bigg(\frac{L_{i}-L_{i-1}}{L_{max}}\bigg).
    \end{equation}
    Similarly, (\ref{lem5_3}) is rewritten as
    \begin{equation}\label{lem5_3}
        f \theta_{i+1} R_{i} - a\bigg(\frac{L_{i}}{L_{max}}\bigg)\geq f \theta_{i+1} R_{i-1} - a\bigg(\frac{L_{i-1}}{L_{max}}\bigg).
    \end{equation}
    Combining (\ref{lem5_4}) and (\ref{lem5_3}), we can obtain
    \begin{equation}\label{lem5_5}
        \begin{split}
        f \theta_{i+1} R_{i+1} - a\bigg(\frac{L_{i+1}}{L_{max}}\bigg)\geq f \theta_{i+1} R_{i-1} - a\bigg(\frac{L_{i-1}}{L_{max}}\bigg).
        \end{split}
    \end{equation}
    By using (\ref{lem5_5}), we can extend downward until $\theta_1$, i.e.,
    \begin{equation}\label{lem5_6}
        \begin{split}
            f \theta_{i+1}& R_{i+1} - a\bigg(\frac{L_{i+1}}{L_{max}}\bigg)\geq f \theta_{i+1} R_{i-1} - a\bigg(\frac{L_{i-1}}{L_{max}}\bigg)\\
            &\geq \cdots \geq  f \theta_{1} R_{1} - a\bigg(\frac{L_{1}}{L_{max}}\bigg),\: \forall i \in \{2,\ldots, K\}.
        \end{split}
    \end{equation}
    Based on (\ref{lem5_6}), we can deduce that the validity of DICs is ensured by the presence of monotonicity and the LDIC. Similarly, it can be demonstrated that the existence of monotonicity and LUIC guarantees the fulfillment of UICs. Thus, the proof is completed.
\end{proof}

According to \textbf{Lemmas} \ref{lem4} and \ref{lem5}, we can deduce the satisfaction conditions of the feasible contract as follows:
\begin{lemma}\label{lem6}
    With information asymmetry, the feasible contract should satisfy the following conditions:
     \begin{subequations}
        \begin{align}
            &\:f \theta_{1} R_{1} - a\bigg(\frac{L_{1}}{L_{max}}\bigg) = 0,\\
            &\:f \theta_{i} (R_{i}-R_{i-1}) = a\bigg(\frac{L_{i}-L_{i-1}}{L_{max}}\bigg),\: \forall i \in \{2,\ldots,K\},\\
            &\: R_K \geq R_{K-1} \geq \cdots \geq R_1,\: L_K \geq L_{K-1}\geq \cdots \geq L_1.
        \end{align}
    \end{subequations}
\end{lemma}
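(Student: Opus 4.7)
The plan is to combine Lemmas \ref{lem4} and \ref{lem5} with the crucial observation that the client's overall subjective utility $U_{PT}^C$ is strictly decreasing in each reward $R_k$. This monotonicity is inherited from $U_k^{EUT}$, which contains the term $-R_k$: on the gain branch, $U_{k,PT}^C = (U_k^{EUT} - U_{ref})^{\zeta^{+}}$ is increasing in $U_k^{EUT}$; on the loss branch, $U_{k,PT}^C = -\eta(U_{ref} - U_k^{EUT})^{\zeta^{-}}$ is also increasing in $U_k^{EUT}$; and the two branches meet continuously at the kink $U_k^{EUT} = U_{ref}$. Hence, whenever feasibility permits lowering some $R_k$, the client strictly benefits, so at the optimum every reward-related constraint must bind.

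First I would invoke Lemma \ref{lem4}, which reduces the $K$ IR constraints to the single requirement $f\theta_1 R_1 - a(L_1/L_{max}) \geq 0$ on the type-$1$ ASP. Because lowering $R_1$ strictly improves $U_{PT}^C$, this must hold with equality at the optimum, giving condition (a). Next I would apply Lemma \ref{lem5}, which reduces the $K(K-1)$ IC constraints to the LDICs and LUICs. For each $i \geq 2$, the LDIC $f\theta_i R_i - a(L_i/L_{max}) \geq f\theta_i R_{i-1} - a(L_{i-1}/L_{max})$ must bind by the same monotonicity argument, yielding condition (b).

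It then remains to verify that the binding-LDIC construction is consistent with the LUICs and with condition (c). Substituting the binding LDIC at index $i+1$ yields $a(L_{i+1}-L_i)/L_{max} = f\theta_{i+1}(R_{i+1}-R_i)$; since $\theta_{i+1} \geq \theta_i$ and (by monotonicity) $R_{i+1} \geq R_i$, this implies $a(L_{i+1}-L_i)/L_{max} \geq f\theta_i(R_{i+1}-R_i)$, which is precisely the LUIC at index $i$. Condition (c) itself follows from Definition \ref{mono} together with Lemma \ref{lem2}, which couples the orderings of $R_k$ and $L_k$.

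The main obstacle will be rigorously handling the PT kink at $U_{ref}$ when justifying that every $R_k$-related constraint binds: one has to check strict monotonicity of $U_{k,PT}^C$ in $R_k$ on both sides of $U_{ref}$ and through the kink, since a careless treatment could allow an optimum that sits exactly at $U_k^{EUT} = U_{ref}$ without the IR or the LDIC being tight. Once this strict monotonicity is secured, Lemmas \ref{lem4} and \ref{lem5} reduce the derivation of conditions (a)--(c) to the essentially algebraic steps outlined above.
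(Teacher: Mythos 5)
Your proposal is correct and follows essentially the same route as the paper's proof: bind the type-$1$ IR constraint and the LDICs by arguing the client always benefits from lowering rewards, then show that the binding LDICs together with monotonicity ($\theta_{i+1}\geq\theta_i$ implying $R_{i+1}\geq R_i$) automatically imply the LUICs via the inequality $a(L_{i+1}-L_i)/L_{max}=f\theta_{i+1}(R_{i+1}-R_i)\geq f\theta_i(R_{i+1}-R_i)$. Your explicit check that $U_{k,PT}^C$ is increasing in $U_k^{EUT}$ on both branches of the prospect-theoretic value function (and hence decreasing in $R_k$ through the kink at $U_{ref}$) is a detail the paper leaves implicit, but it does not change the structure of the argument.
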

\begin{proof}
    For $f \theta_{1} R_{1} - a\big(\frac{L_{1}}{L_{max}}\big) \geq 0$, the client would reduce the reward $R_1$ as much as possible until $f \theta_{1} R_{1} - a\big(\frac{L_{1}}{L_{max}}\big) = 0$ to maximize its utility. For the LDIC, the client would also reduce $R_i$ as much as possible until $f \theta_{i} R_{i} - a\big(\frac{L_{i}}{L_{max}}\big) = f \theta_{i} R_{i-1} - a\big(\frac{L_{i-1}}{L_{max}}\big)$, thus maximizing its utility. In the following part, we prove that if $f \theta_{i} R_{i} - a\big(\frac{L_{i}}{L_{max}}\big) = f \theta_{i} R_{i-1} - a\big(\frac{L_{i-1}}{L_{max}}\big),\: \forall i \in \{2,\ldots, K\}$ and the monotonicity hold, the LUIC holds.

    If $\theta_i \geq \theta_{i-1}$, then $R_i \geq R_{i-1}$, and we further have
    \begin{equation}
        f \theta_{i} (R_i - R_{i-1}) \geq f \theta_{i-1}(R_i - R_{i-1}). 
    \end{equation}
    Since $f \theta_i (R_i-R_{i-1}) = a\big(\frac{L_i-L_{i-1}}{L_{max}}\big)$, we have
    \begin{equation}
        f\theta_{i-1}(R_i-R_{i-1})\leq a \bigg(\frac{L_i}{L_{max}}-\frac{L_{i-1}}{L_{max}}\bigg),
    \end{equation}
    and similarly, we have
    \begin{equation}
        f\theta_{i-1}R_{i-1} - a\bigg(\frac{L_{i-1}}{L_{max}}\bigg) \geq f \theta_{i-1} R_i - a\bigg(\frac{L_i}{L_{max}}\bigg).
    \end{equation}
    Thus, the LDIC remains unchanged when both $R_i$ and $R_{i-1}$ decrease by the same value. The proof is completed.
\end{proof}

Based on \textbf{Lemma \ref{lem6}}, we can derive the optimal reward to ASPs for edge AIGC service provisions by using the iterative method, given by
\begin{equation}\label{R_n2}
    \begin{aligned}
        R_k^{*} = \frac{a}{fL_{max}}\frac{L_1}{\theta_1} + \frac{a}{fL_{max}} \sum\limits_{i=1}^k\Delta_i,\:k\in \mathcal{K}, 
    \end{aligned}
\end{equation}
where $\Delta_1 = 0$ and $\Delta_i = \frac{L_i-L_{i-1}}{\theta_i},\: i \in \{2,\ldots,k\}$.

\begin{figure*}[ht]
\centering
\captionsetup{font=footnotesize}
\includegraphics[width=0.95\textwidth]{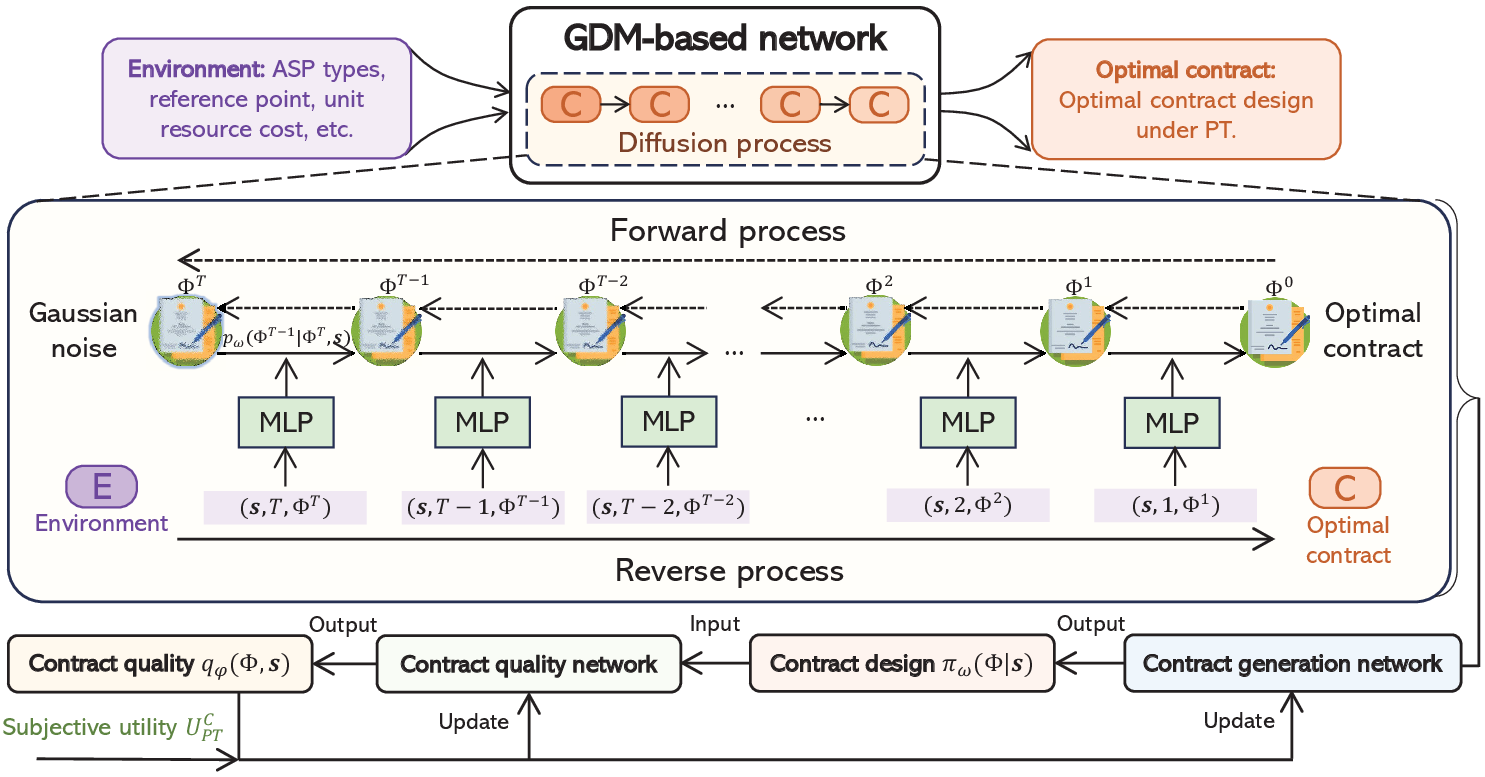}
% \caption{The entire process of UT migration in UAV metaverses.}
\caption{Generative diffusion models for optimal contract design under prospect theory. Note that MLP refers to the multi-layer perception\cite{10409284}.}
\label{diffusion}
\end{figure*}

\begin{remark}
We demonstrate that an optimal contract in the proposed model exists and validate its feasibility. The optimal contract design under PT is modeled as a complex decision-making problem. While traditional mathematical techniques may face difficulties in solving this particular problem, DRL-based approaches may offer a feasible alternative\cite{10409284}. Particularly, Proximal Policy Optimization (PPO) is an on-policy and model-free algorithm that iteratively updates the policy towards optimal performance by using a clipped surrogate objective\cite{wen2024learning}, and SAC stands out as an off-policy algorithm that trains a stochastic policy to maximize both the expected cumulative reward and policy entropy\cite{haarnoja2018soft}, achieving exceptional performance in a wide range of decision-making tasks. However, in practice, the dynamics of 6G IoT networks may significantly impact the action and state spaces of DRL models\cite{lai2023resource,wen2024generative}, necessitating the complete retraining of DRL models and affecting their efficiency of finding the optimal contract design\cite{wen2024generative}. Fortunately, GDMs can effectively capture the high-dimensional and complex structures of 6G IoT networks and present superior performance in decision-making problems\cite{du2023deep}, which are adopted for the optimal contract design under PT in Section \ref{Optimial_Contract}.
\end{remark}

\section{Generative Diffusion Models for Optimal Contract Design under Prospect Theory}\label{Optimial_Contract}
In this section, we propose the diffusion-based SAC algorithm to generate the optimal contract design under PT, as shown in Fig. \ref{diffusion}. Then, we analyze the computational complexity of the proposed algorithm.

% Introduce PT theory

%%%%%%%%%%%% algorithm
% \subsection{Generative Diffusion Models for Contract Generation}
In comparison to traditional backpropagation algorithms in neural networks or DRL techniques, which concentrate on optimizing model parameters directly, GDMs aim to improve contract design through an iterative process of denoising the initial distribution\cite{du2023deep}. Motivated by \cite{10409284}, we advocate for using the diffusion-based SAC algorithm to identify an optimal contract item, i.e., $(L_{k, PT}^{*}, R_k^*),\: k \in \mathcal{K}$, and demonstrate the application of GDMs in generating optimal contract designs\cite{10409284}. Over $T$ iterations, Gaussian noise is systematically added to the initial contract sample $\Phi^0$, resulting in a series of samples $(\Phi^1,\Phi^2,\ldots,\Phi^T)$. In contract modeling, the environment that affects the optimal contract design is defined as 
\begin{equation}
    \boldsymbol{s}:=[M, K, L_{max}, U_{ref}, a, (Q_1,\ldots,Q_K), (\theta_1,\ldots,\theta_K)]. 
\end{equation}

The diffusion model network that maps environmental states to contract designs constitutes the \textit{contract design policy}, denoted as $\pi_\omega(\Phi|\boldsymbol{s})$ with parameters $\omega$. The objective of $\pi_\omega(\Phi|\boldsymbol{s})$ is to generate an optimal contract design that maximizes the expected cumulative reward over a series of time steps. The contract design policy $\pi_\omega(\Phi|\boldsymbol{s})$ through the reverse process of the conditional diffusion model is expressed as \cite{10409284}
\begin{equation}
\begin{split}
    \pi_\omega(\Phi|\boldsymbol{s})&=p_\omega(\Phi^0, \ldots,\Phi^T|\boldsymbol{s})\\
    &=\mathcal{N}(\Phi^T;\textbf{0},\textbf{I})\prod_{t=1}^Tp_\omega(\Phi^{t-1}|\Phi^t,\boldsymbol{s}),
\end{split}
\end{equation}
and the outcome of the reverse chain process represents the selected contract design. $p_\omega(\Phi^{t-1}|\Phi^t,\boldsymbol{s})$ can be modeled as a Gaussian distribution $\mathcal{N}(\Phi^{t-1};\boldsymbol{\mu}_\omega(\Phi^t,\boldsymbol{s},t),\boldsymbol\Sigma_\omega(\Phi^t,\boldsymbol{s},t))$\cite{ho2020denoising}, where the covariance matrix $\boldsymbol\Sigma_\omega(\Phi^t,\boldsymbol{s},t)$ is given by
\begin{equation}
    \boldsymbol\Sigma_\omega(\Phi^t,\boldsymbol{s},t)=\delta_t\textbf{I},
\end{equation}
and the mean $\boldsymbol\mu_\omega(\Phi^t,\boldsymbol{s},t)$ is expressed as
\begin{equation}
\begin{split}
    \boldsymbol\mu_\omega(\Phi^t,\boldsymbol{s},t)=\frac{1}{\sqrt{\chi_t}}\bigg(\Phi^t-\frac{\delta_t}{\sqrt{1-\bar{\chi_t}}}\boldsymbol\epsilon_\omega(\Phi^t,\boldsymbol{s},t)\bigg).
\end{split}
\end{equation}
Here, $\boldsymbol\epsilon_\omega$ represents the \textit{contract generation network}. Based on (\ref{sample}), we first sample $\Phi^T \sim \mathcal N (\textbf{0}, \textbf{I})$ and then sample from the reverse diffusion chain parameterized by $\omega$, given by\cite{10409284}
\begin{equation}\label{denoise}
    \Phi^{t-1}|\Phi^t=\frac{\Phi^t}{\sqrt{\chi_t}}-\frac{\delta_t}{\sqrt{\chi_t(1-\bar{\chi_t})}}\boldsymbol\epsilon_\omega(\Phi^t,\boldsymbol{s},t)+\sqrt{\delta_t}\boldsymbol\epsilon.
\end{equation}

\begin{algorithm}[ht]
\DontPrintSemicolon
\SetAlgoLined
    \caption{Diffusion-based SAC Algorithm for Optimal Contract Design under PT}\label{AI_Contract}

    \KwIn{Diffusion step $T$, batch size $N$, discount factor $\gamma$, soft target update parameter $\tau$.}
    \KwOut{The optimal contract design $\Phi^0$.}
    \textbf{\textit{Phase 1: Initialization} }\\
    Initialize replay buffer $\mathcal{D}$, contract generation network $\boldsymbol{\epsilon}_\omega$ with weights $\omega$, contract quality network $q_\varphi$ with weights $\varphi$, target contract generation network $\boldsymbol{\epsilon}'_{\omega'}$ with weights $\omega'$, target contract quality network $q'_{\varphi'}$ with weights $\varphi'$.\\
    
    \textbf{\textit{Phase 2: Training}} \\
    \For{\rm{Episode} $=1$ to Max\_episode $M_e$}
    {
        Initialize a random process $\mathcal{N}$ for contract design exploration. \\
        \For{\rm{Step} $z=1$ to Max\_step $M_z$}
        {
            Observe the current environment $\boldsymbol{s}_z$.\\
            Set $\Phi_z^T$ as Gaussian noise and generate contract design $\Phi_z^0$ by denoising $\Phi_z^T$ using $\boldsymbol{\epsilon}_\omega$ based on  (\ref{denoise}). \\
            Execute contract design $\Phi_z^0$ and observe the reward $r_z$. \\
            Store record $(\boldsymbol{s}_z,\Phi_z^0, r_z, \boldsymbol{s}_{z+1})$ into replay buffer $\mathcal{D}$. \\
            Sample a random mini-batch $\mathcal{B}_z$ of $N$ records $(\boldsymbol{s}_i,\Phi_i,r_i)$ from replay buffer $\mathcal{D}$. \\
            Update the contract quality network by minimizing (\ref{Q_update}). \\
            Update the contract generation network by computing the policy gradient (\ref{policy_update})\\
            %$\nabla_\omega\boldsymbol{\epsilon}_\omega\approx\frac{1}{N}\sum_i\nabla_{\Phi^0}q_\varphi(\boldsymbol{s},\Phi^0)|_{\boldsymbol{s}=\boldsymbol{s}_i}\nabla_\omega\boldsymbol{\epsilon}_\omega|\boldsymbol{s}_i.$\\
            Update the target networks:
            $\omega'\leftarrow\tau\omega+(1-\tau)\omega'$,
            $\varphi'\leftarrow\tau \varphi+(1-\tau)\varphi'$. \\
        }
    }
    \textbf{return} The trained contract generation network $\boldsymbol{\epsilon}_\omega$. \\
    \textbf{\textit{Phase 3: Inference}} \\
    Input the environment vector $\boldsymbol{s}:=[M, K, L_{max}, U_{ref}, a, (Q_1,\ldots,Q_K), (\theta_1,\ldots,\theta_K)]$. \\
    Generate the optimal contract design $\Phi^0$ by denoising Gaussian noise using $\boldsymbol{\epsilon}'_{\omega'}$ based on (\ref{denoise}). \\
    \textbf{return} $\Phi^0=\small\{(L_{k,PT}^{*},R_k^*),\:k \in \mathcal{K} \small\}$.
\end{algorithm}

\begin{table}[ht]\label{parameter}
	\renewcommand{\arraystretch}{1.2}
        \captionsetup{font = small}
	\caption{ Summary of Key Training Hyperparameters. }\label{table} \centering %\tabcolsep=5pt
	\begin{tabular}{m{5.7cm}<{\raggedright}|m{1.7cm}<{\centering}}	 	
		\hline		
		\textbf{Hyperparameters} & \textbf{Setting}\\	
		\hline
		Learning rate of the contract generation network &  $2\times10^{-7}$\\	
		\hline
		Learning rate of the contract quality network  &  $2\times10^{-6}$  \\	
		\hline
		Soft target update parameter ($\tau$)   & $0.005$  \\
		\hline
		Batch size ($N$) &  $512$\\	
		\hline		
		Discount factor ($\gamma$)&  $0.99$\\
		\hline
        Denoising steps for the diffusion model ($T$) &  $5$\\
		\hline
            Maximum capacity of the replay buffer ($\mathcal{D}$) &  $ 10^6$\\
		\hline
	\end{tabular}\label{parameter}
\end{table}
%\begin{equation}
%    \pi=\arg\min_{\pi_\omega}\mathcal{L}(\omega)=-\Bbb E_{\Phi^0\sim \pi_\omega}[q_\varphi(\boldsymbol{s},\Phi^0)].
%\end{equation}
The effective training of the contract design policy $\pi_\omega$ in the high-dimensional and complex environment $\boldsymbol{s}$ can facilitate the training of the contract generation network $\boldsymbol\epsilon_\omega$. Thus, motivated by the Q-function in DRL, we define the contract quality network as $q_\varphi (\boldsymbol{s}, \Phi)$, which associates an environment-contract pair $\{\boldsymbol{s}, \Phi\}$, using a value to represent the expected cumulative reward for adhering to the contract design policy from the current state\cite{10409284}. Therefore, the optimal contract design policy is the policy that maximizes the expected cumulative utility of the client, given by \cite{10409284}
\begin{equation}\label{policy_update}
    \pi = \arg\max_{\pi_{{\omega}}} \Bbb E \Bigg[\sum_{z = 0} ^ Z \gamma^z (r(\bm{s}_z,\Phi_z)-\varsigma\pi_{\omega}(\bm{s}_z)\log\pi_{\omega}(\bm{s}_z))\Bigg],
\end{equation}
where $\gamma$ represents the discount factor for future rewards and $\varsigma$ denotes the temperature coefficient controlling the strength of the entropy. $r(\bm{s}_z,\Phi_z)$ is the immediate reward upon executing action $\Phi_z$ in state $\bm{s}_z$. If the action $\Phi_z$ both satisfies the IC and IR constraints, $r(\bm{s}_z,\Phi_z)$ is represented as
\begin{equation}
\begin{split}
    r(\bm{s}_z,\Phi_z) = &U_{PT}^C + \sum_{k=1}^KU_k^A(L_k, R_k)\\
&+\sum_{k=1}^K\sum_{n=1}^K\Big(U_k^A(L_k, R_k)-U_k^A(L_n, R_n)\Big).
\end{split}
\end{equation}
Otherwise, $r(\bm{s}_z,\Phi_z)$ is set to $0$.
\begin{comment}
Here, the loss function $\mathcal{L}(\omega)$ is defined as
\begin{equation}\label{loss}
\begin{split}
    \mathcal{L}(\omega)=\frac{1}{N}\sum_{i=1}^N\big(U_{PT}^C+\gamma q_{\varphi '}'(\boldsymbol{s}_i,\Phi_i^{'0})-q_\varphi(\boldsymbol{s}_i,\Phi_i)\big)^2,\\ 
\end{split}
\end{equation}
where $N$ in the context of random mini-batch sampling refers to the number of records sampled from the replay buffer $\mathcal{D}$ for training neural networks, $\gamma$ represents the discount factor, $q_{\varphi '}'$ represents the target contract quality network of $q_\varphi$, and $\Phi_i^{'0}$ can be acquired from the target contract generation network $\boldsymbol\epsilon '_{\omega '}$.
\end{comment}

The contract quality network is trained through traditional methods, employing the double Q-learning technique to minimize the Bellman operator \cite{hasselt2010double}. This process involves two networks, denoted as $q_{\varphi_1}$ and $q_{\varphi_2}$, alongside corresponding target networks, namely $q_{\varphi '_1}$, $q_{\varphi '_2}$, and $\pi_{\omega '}$. Subsequently, the optimization of $\varphi_i$ for $i = 1,2$ is conducted by minimizing the objective function as follows\cite{haarnoja2018soft}:
%\begin{equation}
%    \Bbb E_{\Phi^0\sim\pi_{\omega '}}\big [\big\|\big(r(\boldsymbol{s},\Phi_t)+\gamma\min_{i=1,2}q_{\varphi '_i}(\boldsymbol{s},\Phi^0_{t+1})\big)-q_{\varphi_i}(\boldsymbol{s},\Phi_t)\big\|^2\big ].
%\end{equation}
\begin{equation}\label{Q_update}
\begin{split}
    &\Bbb E_{(\bm{s}_z, \Phi_z, \bm{s}_{z+1}, r_z)\sim \mathcal{B}_z}\Big[\sum_{m=1,2}(r(\bm{s}_z,\Phi_z)-q_{\varphi_m}(\bm{s}_z,\Phi_z)\\
    &\quad\quad+\gamma^z(1-d_{z+1})\pi_{\omega '}(\bm{s}_{z+1})q'_{\varphi'}(\bm{s}_{z+1}))^2\Big],
\end{split}
\end{equation}
where $\mathcal{B}_z$ is a mini-batch of transitions sampled from the experience replay memory $\mathcal{D}$ for training neural networks in the training step $z$, $d_{z+1}$ is a $0$-$1$ variable representing the terminated flag, and $q'_{\varphi'}(\bm{s}_{z+1})) = \min\{q_{\varphi '_1}(\bm{s}_{z+1}), q_{\varphi '_2}(\bm{s}_{z+1})\}$ represents the target contract quality network\cite{10409284}.
% Specifically, the GDM-based generative contract algorithm entails employing denoising techniques to formulate a contract design, followed by the addition of exploration noise to the contract design and its execution to accumulate exploration experience. The algorithm for generating contracts based on GDM is depicted in \textbf{Algorithm \ref{AI_Contract}}. Utilizing \textbf{Algorithm \ref{AI_Contract}} enables us to derive the optimal contract that maximizes the utility of the client under PT.

%This process succeeded by integrating exploration noise into the contract design and executing it to amass exploration experience.

The proposed diffusion-based SAC algorithm involves utilizing denoising techniques to generate the optimal contract design under PT\cite{10409284}, as illustrated in \textbf{Algorithm \ref{AI_Contract}}, which facilitates the derivation of an optimal contract that maximizes the overall subjective utility of the client under PT. \textbf{Algorithm \ref{AI_Contract}} consists of three phases, and its computational complexity is $\mathcal{O}(|\omega|+|\varphi|+M_e M_z (T|\omega| + |\varphi|))$. Specifically, in the initialization phase, the computational complexity stands at $\mathcal{O}(|\omega|+|\varphi|)$. In the training phase, the computational complexity is $\mathcal{O}(M_e M_z (T|\omega| + |\varphi|))$. Finally, to generate an optimal contract item through the contract generation policy, the computational complexity of the inference phase is $\mathcal{O}(|\omega|)$.

% Motivated by the above analysis, the detailed optimal contract design is shown in \textbf{Algorithm \ref{AI_Contract}}. Firstly, the EUT-based solution $f_{n,EUT}^*$ can be obtained by (\ref{f_k,EUT_min}). Then, based on the above three cases, we compare the sizes of $U_{s,n,EUT}$ and $U_{ref}$ sequentially to obtain the optimal PT-based solution $f_{n,PT}^*$. Finally, by substituting $f_{n,PT}^*$ into (\ref{pi}), the optimal rewards $R_n^*$ can be calculated. In particular, the computational complexity of \textbf{Algorithm \ref{AI_Contract}} in the worst case is $\mathcal{O}(N(N-1))$, which emphasizes that we can use \textbf{Algorithm \ref{AI_Contract}} to find the optimal contract under PT for all cases that are analyzed above.
%the service provider solves the PT-based solution $f_{n,PT}^{*}$ of each contract item based on preference profiles. By substituting $f_{n,PT}^{*}$ into (\ref{pi}), the monetary rewards in the contract items can be calculated. 
%If a worker is the first one to accept a contract item by responding to the service provider, the task will be assigned to the worker, and the other contract items will be automatically invalid. 

\begin{figure*}[t]
    \begin{center}
	\begin{minipage}[t]{0.45\linewidth}
		\centering
            \captionsetup{font=footnotesize}
            \includegraphics[width=1\textwidth]{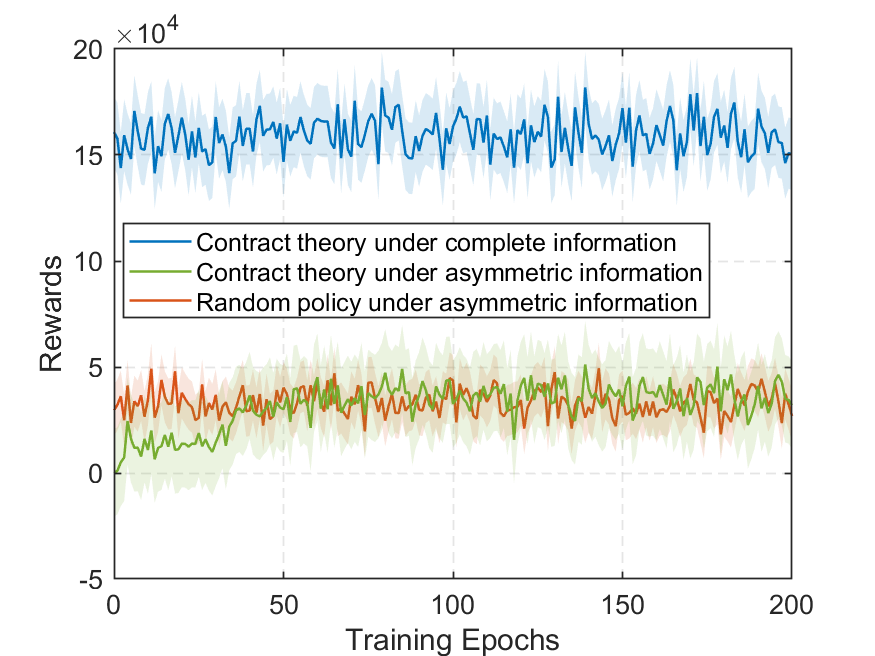}
            \caption{Test reward comparison of our scheme with other schemes under PT, i.e., contract-based incentive mechanism with complete information and random policy under asymmetric information, where reference point $U_{ref}=200$ and loss aversion $\eta = 0.5$.}\label{compare}
	\end{minipage}
	\hspace{0.3in}
	\begin{minipage}[t]{0.45\linewidth}
		\centering
            \includegraphics[width=1\linewidth]{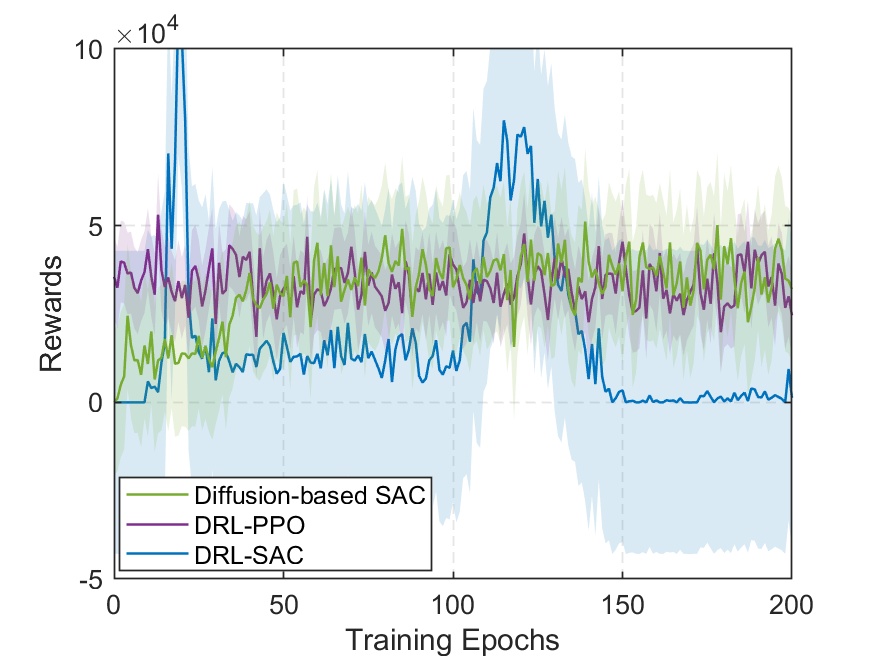}
		\captionsetup{font=footnotesize}
            \caption{Performance comparison between the diffusion-based SAC algorithm and traditional DRL algorithms in optimal contract design under PT, where reference point $U_{ref}=200$ and loss aversion $\eta = 0.5$.}\label{algorithm_compare}
	\end{minipage}
	\end{center}
\end{figure*}

\begin{figure*}[t]
    \begin{center}
	\begin{minipage}[t]{0.45\linewidth}
		\centering
		\captionsetup{font=footnotesize}
		\includegraphics[width=1\linewidth]{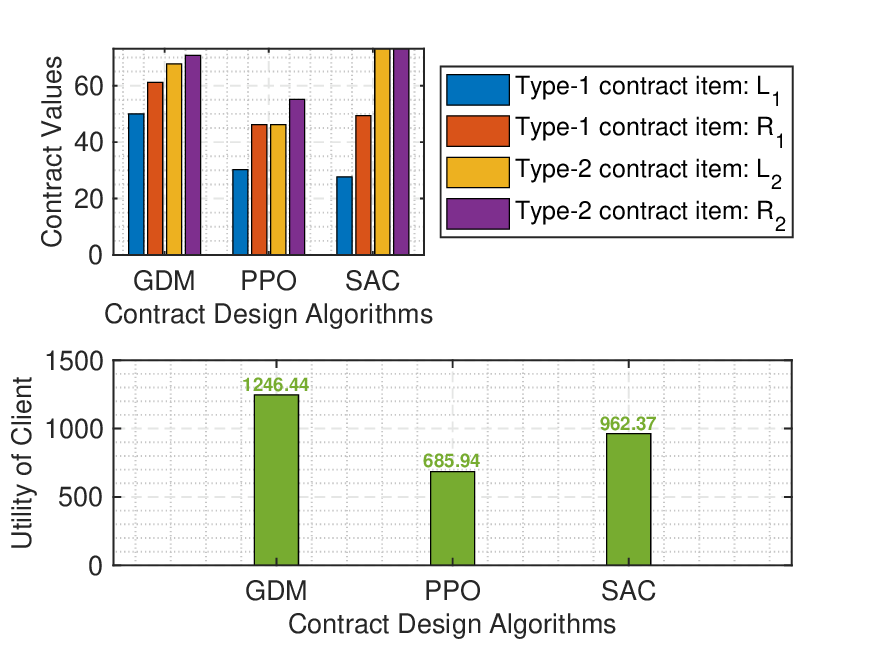}
		\caption{Optimal contracts designed by the proposed diffusion-based SAC, DRL-PPO, and DRL-SAC, with reference point $U_{ref} = 250$ and loss aversion $\eta = 0.5$.}\label{Contract}  
	\end{minipage}
	\hspace{0.3in}
	\begin{minipage}[t]{0.45\linewidth}
		\centering
            \includegraphics[width=1\linewidth]{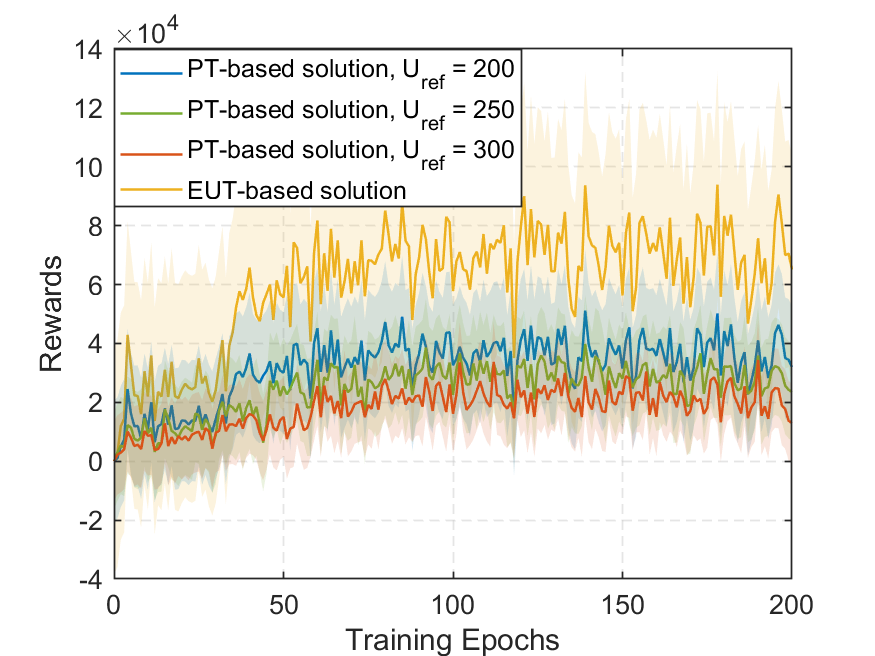}
		\captionsetup{font=footnotesize}
            \caption{Test reward comparison of the proposed scheme under different reference points $U_{ref}$, with loss aversion $\eta = 0.5$.}\label{reference}
	\end{minipage}
	\end{center}
\end{figure*}

\section{Numerical Results}\label{Results}

In this section, we first introduce the setting of the GDM in optimal contract design under PT. Then, we evaluate the performance of the proposed GDM-based contract generation model under PT. 

\subsection{Setting of the GDM}
We design the contract generation network $\boldsymbol{\epsilon}_\omega$ and the contract quality network $q_\varphi$ with the same structure to reduce the issue of overestimation\cite{10409284}. For the GDM-based contract generation model under PT, the specific settings of training hyperparameters are summarized in Table \ref{parameter}. In this paper, we consider $5$ ASPs, which are divided into $2$ types, i.e., $M = 5$ and $K = 2$. According to \cite{wen2024generative, liu2023deep}, $\theta_1$ and $\theta_2$ are randomly sampled within $[10,50]$ and $[100,200]$, respectively. $Q_1$ and $Q_2$ are generated randomly, following the Dirichlet distribution. The maximum tolerant service latency $L_{max}$ is randomly sampled within $[100,160]$. The unit resource cost of edge AIGC service provisions $a$ is randomly sampled within $[80,100]$. The weighting factors representing gain and loss distortion $\zeta^{+}, \zeta^{-}$ are both set to $1$. Weighting factors $f$, $e_1$, $e_2$, $z_1$, and $z_2$ are set to $0.05$, $3$, $50$, $1$, and $1$, respectively. Our experiments are conducted using PyTorch on NVIDIA GeForce RTX 3080 Laptop GPU with CUDA 12.0.

\begin{figure*}[t]
\centering
\subfigure[Comparison of test reward curves, with reference point $U_{ref} = 200$.]
{
    \begin{minipage}[t]{0.45\linewidth}
	\centering
	\includegraphics[width=1\linewidth]{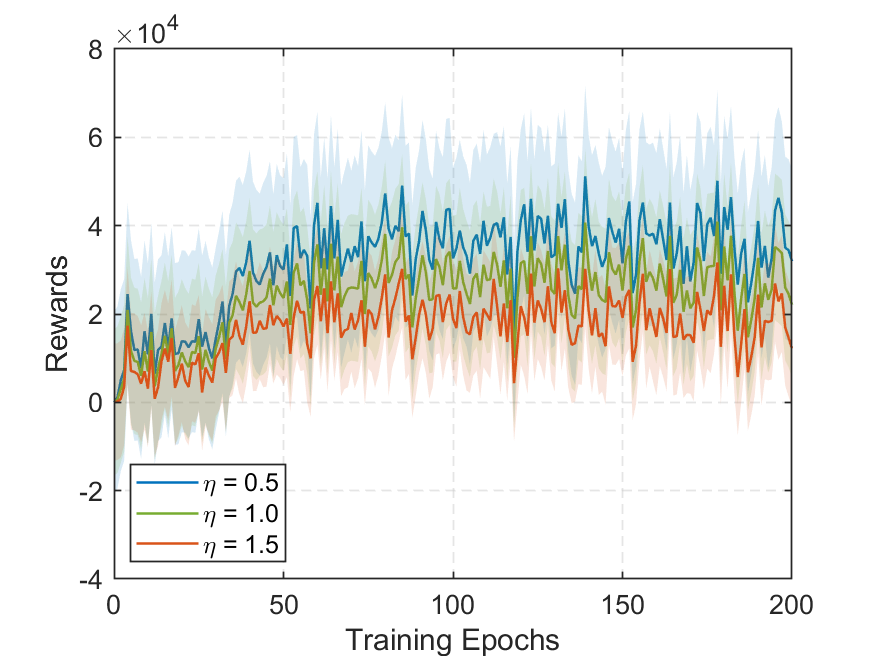}
        \captionsetup{font=footnotesize}
	\label{loss_eta}
    \end{minipage}
}\hspace{0.3in}
\subfigure[Utilites of the client and ASPs.]
{
    \begin{minipage}[t]{0.45\linewidth}
	\centering
	\includegraphics[width=1\linewidth]{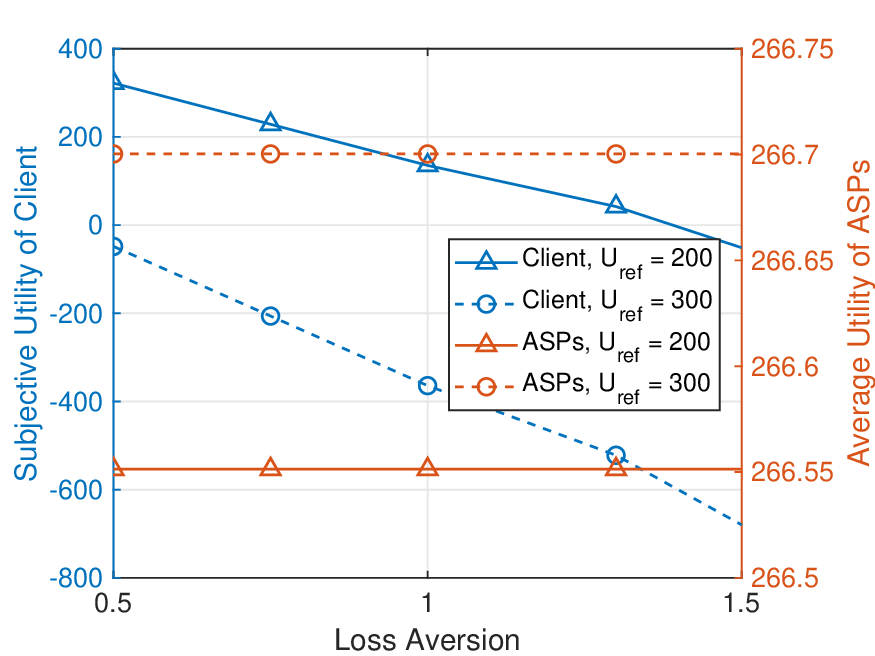}
        \captionsetup{font=footnotesize}
	\label{utility}
    \end{minipage}
}
\captionsetup{font=footnotesize}
\caption{Effect comparison of the proposed algorithm under different loss aversions $\eta$.}\label{effect}
\end{figure*}

\subsection{Performance Analysis of the Proposed GDM-based Contract Generation Model under PT}
To evaluate the performance of the proposed incentive mechanism, we compare the proposed contract-based incentive mechanism with asymmetric information with other schemes under PT: 1) \textit{Contract-based incentive mechanism with complete information}\cite{hou2017incentive} that the client knows the private information of ASPs (i.e., ASP types); 2) \textit{Random policy under asymmetric information} that the client randomly designs contracts regardless of the types of ASPs. As shown in Fig. \ref{compare}, the proposed contract generation model under PT always outperforms random schemes. For the same parameter settings, the performance of the contract-based incentive mechanism with complete information is always greater than that of the proposed contract generation model, which indicates that the client obtains fewer benefits due to information asymmetry. The reason is that the client knows the exact type information of ASPs with complete information, thus offering the most suitable contract item for each ASP\cite{kang2023blockchain}. However, the environment of complete information is not realistic. Although the proposed scheme can effectively mitigate the impact of information asymmetry by leveraging contract theory\cite{wen2023task}, a rational ASP still can maliciously provide false information to manipulate and obtain higher rewards, ultimately reducing the subjective utility of the client. In summary, our proposed contract model under PT allows the client to achieve the highest utility under asymmetric information, which is more reliable in practice.

Figure \ref{algorithm_compare} illustrates the performance comparison between the diffusion-based SAC algorithm and traditional DRL algorithms in optimal contract design under PT. We can observe that by struggling to satisfy IC and IR contracts, the test reward value of the proposed algorithm rises sharply and then converges. Under the same parameter settings, the final test reward obtained by the diffusion-based SAC algorithm is significantly higher than those obtained through DRL algorithms, enabling clients to consistently obtain more utilities, where the final test rewards obtained by the diffusion-based SAC algorithm, DRL-PPO, and DRL-SAC are $31960.22$, $24608.47$, and $1257.813$, respectively. The superior performance of the proposed algorithm is mainly attributed to two aspects\cite{liu2023deep,du2023deep}. Firstly, the fine-grained policy tuning in the diffusion process can effectively alleviate the influence of randomness and noise. Secondly, exploration through diffusion can improve the flexibility and robustness of the strategy and prevent the model from falling into suboptimal solutions. Therefore, this superior performance demonstrates the ability of GDMs to capture intricate patterns and connections among environmental observations\cite{du2023deep}, thus promisingly paving the way for enhanced decision-making in 6G-IoT networks, despite the complicated and high-dimensional nature of the 6G-IoT network environment.

Figure \ref{Contract} presents the quality of contracts designed by the GDM, DRL-PPO, and DRL-SAC, with reference point $U_{ref} = 250$ and loss aversion $\eta = 0.5$. For a given state of the environment, we can observe that contributed to the exploration experience during the denoising process\cite{liu2023deep}, the proposed GDM-based contract generation model under PT can provide a contract design that achieves a client utility value of $1246.44$, which is significantly higher than the $685.94$ achieved by DRL-PPO and the $193.47$ achieved by DRL-SAC. The reason is that the ability of GDMs to generate a near-optimal contract design that allows the client to achieve a higher utility. For the contract design, we can observe that as the type of ASPs increases, the rewards received by the ASPs also increase, and correspondingly, the service latency requirement becomes higher. This serves to validate the contentions made in \textbf{Lemma \ref{lem1}} and \textbf{Lemma \ref{lem2}}. Moreover, we can see that the variables in the contract item are the same when ASPs are the type-$2$ in DRL-SAC, indicating that DRL-SAC can easily fall into the local optima when solving for the optimal contract\cite{du2023deep}. Overall, the above numerical analysis demonstrates the feasibility and superior performance of the proposed scheme.

Figure \ref{reference} shows the impacts of reference points on PT and EUT-based solutions for the optimal contract design, where the PT-based solution refers to the proposed contract model with PT and the EUT-based solution refers to the proposed contract model without PT. By comparing the performance of PT and EUT-based solutions for the proposed contract model, we can observe that the EUT-based solution always outperforms the PT-based solution for the optimal contract design, regardless of reference points $U_{ref}$. The reason is that the EUT-based solution does not consider the irrational behavior of clients when the clients face a risky and uncertain environment\cite{ye2022incentivizing}, diminishing the impact on the utility of clients due to their irrationality. However, this solution may not be feasible in practice. In addition, we can observe that the performance of the PT-based solution with a low reference point is better than that of the PT-based solution with a high reference point, indicating that the clients with low reference points can obtain more revenues theoretically.

Figure \ref{effect} illustrates the effect comparison of the proposed algorithm under different loss aversions $\eta$. As shown in Fig. \ref{loss_eta}, we can observe that the smaller the loss aversion, the better the performance of the proposed algorithm. Moreover, despite differences in loss aversions, GDMs can still effectively find the optimal contracts with the same convergence rate, which demonstrates the adaptability and robustness of the proposed scheme. The adaptability and robustness of the proposed scheme are particularly crucial in 6G-IoT networks where conditions may be unpredictable and complicated\cite{du2023deep}. Figure \ref{utility} shows the impacts of preference parameters on the utilities of the client and ASPs, namely the reference point $U_{ref}$ and loss aversion $\eta$. We can observe that given a reference point, with the loss aversion increasing, the subjective utility of the client decreases. The reason is that the increase in loss aversion means that the client is more inclined to have a risk-averse behavior\cite{kang2023blockchain}, indicating that the client needs ASPs with more complex AIGC models for edge AIGC service provisions to avoid utility losses. Thus, the client needs to give higher rewards to ASPs with higher types, reducing the subjective utility of the client. From Fig. \ref{utility}, we can see that the average utility of ASPs is stable. Although the objective utility of ASPs with higher types increases, the objective utility of ASPs with lower types decreases due to the indifference of the client, resulting in the stability of the average utility of ASPs. When the loss aversion is fixed, with the increase of the reference point, the subjective utility of the client decreases, which validates the results presented in Fig. \ref{loss_eta}.

\section{Conclusion}\label{Conclusion}
In this paper, we studied incentive mechanism design for edge AIGC service provisions within 6G-IoT networks. We designed a user-centric incentive mechanism framework for edge AIGC services in 6G-IoT networks. Specifically, we proposed a contract theory model to incentivize ASPs to provide AIGC services to clients under information asymmetry. Considering the decision-making of clients under risks and uncertainty, we utilized PT to better capture the subjective utility of clients. Furthermore, we proposed the diffusion-based SAC algorithm for optimal contract design under PT. Finally, numerical results demonstrate the effectiveness and reliability of the proposed GDM-based contract generation model under PT. For future work, we will further focus on the incentive mechanism design among multiple clients and multiple ASPs for edge AIGC service provisions, considering the effects of the irrational behavior of clients.

\bibliographystyle{IEEEtran}
\bibliography{ref}

\end{document}